\newtheorem{thm}{Theorem} [section]
\newtheorem{cor}[thm]{Corollary}
\newtheorem{lem}[thm]{Lemma}
\newtheorem{prop}[thm]{Proposition}
\theoremstyle{definition}
\newtheorem{defn}[thm]{Definition}
\theoremstyle{remark}
\newtheorem{rem}[thm]{Remark}
\newtheorem{exa}[thm]{Example}
\numberwithin{equation}{section}
\newcommand{\fq}{{\mathbb F}_{q}}
\newcommand{\fp}{{\mathbb F}_{p}}
\newcommand{\C}{{\mathcal C}}
\newcommand{\Tr}{\mbox{Tr}}
\newcommand{\wt}{\mbox{wt}}
\newcommand{\rmv}[1]{}
\def\<{\left\langle}
\def\>{\right\rangle}
\begin{document}

\title[Weight distribution of a class of $p$-ary codes]{Weight distribution of a class of $p$-ary codes}%
\author{Kaimin Cheng}
\address{School of Mathematics and Information, China West Normal University,
Nanchong, 637002, P. R. China}
\email{ckm20@126.com,kcheng1020@gmail.com}
\author{Du Sheng}
\address{School of Mathematics and Information, China West Normal University,
Nanchong, 637002, P. R. China}
\email{dsheng@cwnu.edu.cn}
\thanks{This work was supported partially the China Scholarship Council Foundation(No. 202301010002) and the
Scientific Research Innovation Team Project of China West Normal University (No. KCXTD2024-7).}
\keywords{Linear codes, Weight distribution, Optimal codes.}
\subjclass[2000]{Primary 94B05}
\date{\today}
\begin{abstract}
Let $p$ be a prime, and let $N$ be a positive integer such that $p$ is a primitive root modulo $N$. Define $q = p^e$, where $e = \phi(N)$, and let $\mathbb{F}_q$ be the finite field of order $q$ with $\mathbb{F}_p$ as its prime subfield. Denote by $\mathrm{Tr}$ the trace function from $\mathbb{F}_q$ to $\mathbb{F}_p$. For $\alpha \in \mathbb{F}_p$ and $\beta \in \mathbb{F}_q$, let $D$ be the set of nonzero solutions in $\mathbb{F}_q$ to the equation $\mathrm{Tr}(x^{\frac{q-1}{N}} + \beta x) = \alpha$. Writing $D = \{d_1, \ldots, d_n\}$, we define the code $\mathcal{C}_{\alpha,\beta} = \{(\mathrm{Tr}(d_1 x), \ldots, \mathrm{Tr}(d_n x)) : x \in \mathbb{F}_q\}$. In this paper, we investigate the weight distribution of $\mathcal{C}_{\alpha,\beta}$ for all $\alpha \in \mathbb{F}_p$ and $\beta \in \mathbb{F}_q$, with a focus on general odd primes $p$. When $\beta = 0$, we establish that $\mathcal{C}_{\alpha,0}$ is a two-weight code for any $\alpha \in \mathbb{F}_p$ and compute its weight distribution. For $\beta \neq 0$, we determine all possible weights of codewords in $\mathcal{C}_{\alpha,\beta}$, demonstrating that it has at most $p+1$ distinct nonzero weights. Additionally, we prove that the dual code $\mathcal{C}_{0,0}^{\perp}$ is optimal with respect to the sphere packing bound. These findings extend prior results to the broader case of any odd prime $p$.
\end{abstract}

\maketitle

\section{Introduction}
Let $\fq$ the finite field with $q$ elements and denote by $\fq^*$ the multiplicative group of $\fq$, where $q$ is a power of a prime $p$. Let $n,k$, and $d$ be positive integers. An $[n,k,d]$ $p$-ary linear code is a $k$-dimensional subspace of $\mathbb{F}_p^n$ with minimum (Hamming) distance $d$. For a positive integer $i$, denote by $A_i$ the number of codewords with Hamming weight $i$ in a code $\mathcal{C}$ of length $n$. The weight enumerator of $\mathcal{C}$ is given by
$$1+A_1z+A_2z^2+\cdots+A_nz^n,$$
and the sequence $(A_1,A_2,\ldots,A_n)$ is called the \textit{weight distribution} of $\mathcal{C}$. If the number of nonzero $A_i$ in the sequence $(A_1,A_2,\ldots,A_n)$ is $v$, then the code $\mathcal{C}$ is called a $v$-weight code. The weight distribution of a linear code determines both the minimum distance and error-correcting capabilities of the code. Moreover, it provides vital information for computing the probability of error detection and correction for certain algorithms \cite{[Klo]}. Codes with few weights have garnered significant interest due to their applications in various fields, including secret sharing \cite{[ADHK],[Sha],[YD]}, authentication codes \cite{[DW]}, and combinatorial designs \cite{[CK]}. Therefore,  constructing such codes and analyzing the weight distribution of these codes is an interesting and meaningful topic in coding theory.

Given a set $\{d_1, \ldots, d_n\} \subseteq \mathbb{F}_q^*$ (termed a defining set), we define a code $\mathcal{C}$ as follows:
\begin{align}\label{c1-1}
\mathcal{C} := \{({\rm Tr}(d_1x), {\rm Tr}(d_2x), \ldots, {\rm Tr}(d_nx)) : x \in \mathbb{F}_q\},
\end{align}
where ${\rm Tr}$ represents the trace function from $\mathbb{F}_q$ to $\mathbb{F}_p$. Clearly, $\mathcal{C}$ is a $p$-ary linear code. This approach to constructing linear codes via a defining set is fundamental, as every linear code over $\mathbb{F}_p$ can be expressed using some defining set (possibly a multiset) \cite{[Din]}. For examples of code classes constructed through careful selection of defining sets, see \cite{[Din1], [Din2], [TXF], [WDX]}.

Let $N$ be a positive integer satisfying $p$ being a primitive root modulo $N$ (so that $N$ must be one of the forms $2$, $4$, $\ell^m$ and $2\ell$, where $\ell$ is an odd prime and $m$ is a positive integer), and assume that $q=p^{\phi(N)}$, where $\phi$ is the Euler totient function.
For $\alpha\in\fp$ and $\beta\in\fq$, define a defining set $D$ by
\begin{align}\label{c1-2}
D=\left\{x\in\fq^*:\ {\rm Tr}(x^{\frac{q-1}{N}}+\beta x)=\alpha\right\},
\end{align}
and let $\C_{\alpha,\beta}$ be constructed as \eqref{c1-1} when $D=\{d_1,\ldots,d_n\}$. In 2015, by using some results of Moisio \cite{[Moi]}, Wang et al. \cite{[WDX]} proved that $\C_{0,0}$ is a two-weight code in the case of $p=2$ and $N=\ell^m$. In 2024, the first author, together with Gao \cite{[CG]}, proposed a novel method for evaluating a class of binomial Weil sums. Building on the precise computation of these Weil sums results, they determined the weight distribution of $\C_{0,0}$ for $p = 3$ and $N=\ell^m$. More recently, for $p = 3$, the first author \cite{[Che]}, leveraging the findings from \cite{[CG]}, established the weight distribution of $\C_{\alpha,\beta}$ for any $\alpha \in \mathbb{F}_p$ and $\beta \in \mathbb{F}_q$. In this paper, we seek to extend this analysis by determining the weight distribution of $\C_{\alpha,\beta}$ for any $\alpha \in \mathbb{F}_p$ and $\beta \in \mathbb{F}_q$ for the general odd prime $p$. Actually, when $\beta=0$, we prove that $\C_{\alpha,0}$ is a two-weight code for any $\alpha\in\fp$ and provide its weight distribution; when $\beta\ne 0$, we present all the possible weights of codewords in $\C_{\alpha,\beta}$, showing that $\C_{\alpha,\beta}$ has at most $p+1$ distinct nonzero weights. Clearly, our results extend previous findings for $p=2,3$ to the general case of any odd prime $p$.

In the following of this paper, we consistently assume that $N = 2\ell^m$ such that $p$ is primitive modulo $2\ell^{m}$, where $\ell$ is a prime distinct with $p$ and $m$ is a positive integer. Since the cases of other forms ($N=4, \ell^{m}$) of $N$ can be addressed similarly, we focus exclusively on the most complex case. The paper is structured as follows. In Section 2, we present essential preliminaries. In Section 3, by deriving precise evaluations of two Weil sums, we determine the weight distribution of a class of $p$-ary codes. In Section 4, we analyze the dual codes and obtain a class of optimal codes.

\section{Preliminaries}
In this section, we provide the necessary preliminaries and background results. Let $p$ be an odd prime, and $q=p^e$ with $e=\phi(2\ell^m)$. Denote by $\chi$ the canonical additive character of $\fq$, i.e., $\chi(c)=\zeta_p^{{\rm Tr}(c)}$ for any $c\in\fq$, where $\zeta_p=\exp(2\pi i/p)\in\mathbb{C}$. For $a,b\in\fq$, define two exponential sums over $\fq$ by
\begin{align}\label{c2-1}
S_{2\ell^{m}}(a,b):=\sum_{x\in\fq^*}\chi(ax^{\frac{q-1}{2\ell^{m}}}+bx)\ \text{and}\ S_{2\ell^{m}}(a):=\sum_{i=0}^{2\ell^{m}-1}\chi(a\xi^{i}).
\end{align}
Let $g$ be a fixed primitive element of $\mathbb{F}_q$. Define $\xi := g^{\frac{q-1}{N}}$, and note that one can verify $\mathbb{F}_q = \mathbb{F}_p(\xi)$, implying that ${\xi, \ldots, \xi^e}$ forms a basis of $\mathbb{F}_q$ over $\mathbb{F}_p$. The following fundamental result from \cite{[CG]} offers a method for computing $S_{2\ell^{m}}(a,b)$, which enables explicit evaluations of $S_{2\ell^{m}}(a,b)$ (see Proposition \ref{prop3.1}).
\begin{lem}\label{lem2.1}
The following statements are true.
\begin{enumerate}[(a)]
\item  For any $a\in\fq$, we have 
$$S_{2\ell^{m}}(a,0)=\frac{q-1}{2\ell^{m}}S_{2\ell^{m}}(a).$$
\item  For any $a\in\fq$ and $b\in\fq^*$, we have 
$$S_{2\ell^{m}}(a,b)=\chi(ab^{-\frac{q-1}{2\ell^{m}}})\sqrt{q}-
\frac{\sqrt{q}+1}{2\ell^{m}}
S_{2\ell^{m}}(ab^{-\frac{q-1}{2\ell^{m}}}).$$
\end{enumerate}
\end{lem}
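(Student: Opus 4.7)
The plan is to treat the two parts separately: (a) is elementary fibre-counting, while (b) reduces to an explicit evaluation of the order-$N$ Gauss periods of $\fq$ via semiprimitive Gauss sums, with $N=2\ell^m$.

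For (a), the map $x\mapsto x^{(q-1)/N}$ is a surjective homomorphism from $\fq^*$ onto the cyclic group $\langle\xi\rangle$ of $N$-th roots of unity, with every fibre of size $(q-1)/N$. Grouping the defining sum by fibres gives
\[
S_{2\ell^m}(a,0)=\frac{q-1}{N}\sum_{i=0}^{N-1}\chi(a\xi^i)=\frac{q-1}{N}S_{2\ell^m}(a).
\]

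For (b), I would first normalize by the substitution $x=b^{-1}y$ to obtain $S_{2\ell^m}(a,b)=\sum_{y\in\fq^*}\chi(a'y^{(q-1)/N}+y)$ with $a':=ab^{-(q-1)/N}$. Partitioning $\fq^*$ into the $N$ cyclotomic cosets $C_0,\ldots,C_{N-1}$ of the subgroup of $N$-th powers and using that $y^{(q-1)/N}=\xi^i$ on $C_i$ gives the key decomposition
\[
S_{2\ell^m}(a,b)=\sum_{i=0}^{N-1}\chi(a'\xi^i)\,\eta_i,\qquad \eta_i:=\sum_{y\in C_i}\chi(y).
\]
Choosing a multiplicative character $\psi$ of $\fq^*$ of order $N$ and applying orthogonality expresses each Gauss period as
\[
\eta_i=\frac{1}{N}\Bigl(-1+\sum_{k=1}^{N-1}\zeta_N^{-ki}\,G(\psi^k)\Bigr),
\]
so the whole problem reduces to evaluating the Gauss sums $G(\psi^k)$.

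Because $p$ is a primitive root modulo $N=2\ell^m$, it is primitive modulo every divisor $d>2$ of $N$ as well, which forces $p^{\phi(d)/2}\equiv-1\pmod d$ (the unique element of order $2$ in a cyclic group being $-1$). This is the semiprimitive hypothesis, so Stickelberger's theorem delivers each $G(\psi^k)$ as an explicit $\pm\sqrt q$, with the quadratic character $\psi^{N/2}$ handled by the classical quadratic Gauss-sum formula. Substituting these pure values into the expression for $\eta_i$ and summing the resulting short exponential sum in $\zeta_N^{-ki}$ shows that the $\eta_i$ take only two distinct values: a single exceptional one absorbing a factor of $\sqrt q$ and a common value $-(\sqrt q+1)/N$ for the rest. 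Writing $\sum_{i\ne i_0}\chi(a'\xi^i)=S_{2\ell^m}(a')-\chi(a'\xi^{i_0})$ in the coset decomposition then reassembles everything into the claimed identity. The main obstacle is the sign analysis: pinning down each $\pm$ in Stickelberger's formula across all divisor orders $d\mid N$ and verifying that the exceptional term matches the stated prefactor $\chi(ab^{-(q-1)/N})\sqrt q$ requires careful bookkeeping of the signs and the primitive-root conditions; beyond that, the simplification is purely algebraic.
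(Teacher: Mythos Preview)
The paper does not prove Lemma~2.1; it is quoted verbatim from \cite{[CG]}, so there is no in-paper argument to compare your attempt against. Your plan is a standard and valid route to an identity of this shape. Part~(a) is exactly the one-line fibre-counting one expects, and for part~(b) the reduction $S_N(a,b)=\sum_{i}\chi(a'\xi^{i})\,\eta_i$ with $a'=ab^{-(q-1)/N}$, followed by the Fourier expansion of the periods $\eta_i$ in terms of the semiprimitive Gauss sums $G(\psi^k)$, is the natural approach and will go through.

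One warning about the sign bookkeeping you rightly flag as the crux. Carrying out the evaluation (Berndt--Evans--Williams for orders $d>2$, Davenport--Hasse with even $e$ for the quadratic character) under the hypothesis that $p$ is primitive modulo $2\ell^m$ gives $G(\psi^k)=(-1)^{k}\sqrt q$ for $1\le k\le N-1$. Substituting $\epsilon_k=(-1)^k$ into $\eta_i=\tfrac1N\bigl(-1+\sqrt q\sum_{k\ne 0}\epsilon_k\zeta_N^{-ki}\bigr)$ places the exceptional period at $i_0=N/2$, i.e.\ at $\xi^{i_0}=-1$, not at $i_0=0$. A quick check with $p=5$, $\ell=3$, $m=1$, $q=25$ makes this visible: the coset of sixth powers is $\mathbb F_5^*$ with period $-1=-(\sqrt q+1)/N$, while the exceptional value $\sqrt q-1=4$ sits on the trace-zero coset $\{t,2t,3t,4t\}$, on which $x^{(q-1)/N}=-1$. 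Your method therefore delivers
\[
S_{2\ell^m}(a,b)=\chi\!\bigl(-ab^{-\frac{q-1}{2\ell^m}}\bigr)\sqrt q-\frac{\sqrt q+1}{2\ell^m}\,S_{2\ell^m}\!\bigl(ab^{-\frac{q-1}{2\ell^m}}\bigr),
\]
which differs from the printed formula by a sign inside $\chi$. Since $S_{2\ell^m}(-c)=S_{2\ell^m}(c)$, this only swaps the cases $j_b=0\leftrightarrow \ell^m$ and $J_1^{(2)}\cup J_3^{(2)}\leftrightarrow J_1^{(3)}\cup J_3^{(3)}$ in Proposition~3.1 and leaves every weight-distribution statement intact; but you should not expect the bookkeeping to reproduce the displayed prefactor $\chi(ab^{-(q-1)/N})$ literally, so do not be alarmed when the exceptional term lands on $\xi^{N/2}$ rather than $\xi^{0}$.
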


Define $J$ to be the set of nonnegative integers $j$ with $ 0\le j\le 2\ell^{m}-1$. Let $J_{1},J_{2},J_{3}$ and $J_{4}$ be subsets of $J$ defined by
\begin{align*}
J_1&=\{j\in J:\ 1\le j\le (\ell-1)\ell^{m-1}\},\\
J_2&=\{j\in J:\  (\ell-1)\ell^{m-1}< j\le\ell^{m}\},\\
J_3&=\{j\in J:\  \ell^{m}< j\le 2\ell^{m}-\ell^{m-1}\},\\
J_4&=\{j\in J:\  2\ell^{m}-\ell^{m-1}< j\le\ 2\ell^{m}-1\}.
\end{align*}
Clearly, $(\{0\},J_1,J_2,J_3,J_4)$ is a partition of $J$. Now we partition $J_1$ into $J_{1}^{(1)}\cup J_{1}^{(2)}\cup J_{1}^{(3)}$, where $J_1^{(1)}=\{j\in J_1: \ell^{m-1}\nmid j\}$, $J_1^{(2)}=\{j\in J_1: \ell^{m-1}\mid j\ \text{and}\ 2\nmid j\}$ and $J_1^{(3)}=\{j\in J_1: 2\ell^{m-1}\mid j\}$.  Note that $J_3=\{\ell^m+k\ell^{m-1}-u:\ 1\le k\le \ell-1\ \text{and}\ 0\le u\le\ell^{m-1}-1\}$.
So we can partition $J_3$ into $J_3=J_3^{(1)}\cup J_3^{(2)}\cup J_3^{(3)}$, where
$J_3^{(1)}=\{\ell^m+k\ell^{m-1}-u:\ 1\le k\le \ell-1\ \text{and}\ 0<u\le\ell^{m-1}-1\}$, $J_3^{(2)}=\{\ell^m+k\ell^{m-1}:\ 1\le k\le \ell-1\ \text{and}\ 2\mid k\}$,  and $J_3^{(3)}=\{\ell^m+k\ell^{m-1}:\ 1\le k\le \ell-1\ \text{and}\ 2\nmid k\}$.

Recall that $\xi = g^{\frac{q-1}{2\ell^m}}$ is a primitive $2\ell^m$-th root of unity in $\mathbb{F}_q$.  The following result, presented in \cite[Lemma 2.8]{[CG]}, gives the trace of any power of $\xi$.
\begin{lem}\label{lem2.2}
Let $j\in J$. Then
$${\rm Tr}(\xi^j)=\begin{cases}
(\ell-1)\ell^{m-1},& \text{if}\ j=0,\\
-(\ell-1)\ell^{m-1},& \text{if}\ j=\ell^{m},\\
-\ell^{m-1},& \text{if}\ j\in J_{1}^{(3)}\cup J_{3}^{(3)},\\
\ell^{m-1},& \text{if}\ j\in J_{1}^{(2)}\cup J_{3}^{(2)},\\
0,& \text{otherwise}.
\end{cases}$$
\end{lem}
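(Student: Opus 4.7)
The key observation is that since $p$ is a primitive root modulo $2\ell^m$ and $(\Z/2\ell^m\Z)^{\times} \cong (\Z/\ell^m\Z)^{\times}$, the integer $p$ is also a primitive root modulo every divisor $n>1$ of $2\ell^m$; consequently the cyclotomic polynomial $\Phi_n(x)$ is irreducible over $\fp$ of degree $\phi(n)$. The plan is to combine this structural fact with the tower formula for the trace.

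For $j\in J$, set $n := 2\ell^m/\gcd(j,2\ell^m)$ (with the convention $n=1$ when $j=0$), so that $\xi^j$ is a primitive $n$-th root of unity in $\fq$ with $[\fp(\xi^j):\fp] = \phi(n)$. Because $p$ generates $(\Z/n\Z)^{\times}$, the Galois orbit $\{(\xi^j)^{p^i} : 0 \le i < \phi(n)\}$ is exactly the set of all primitive $n$-th roots of unity. By the classical identity $\sum_\zeta \zeta = \mu(n)$ summed over all primitive $n$-th roots of unity in any field, the trace of $\xi^j$ from $\fp(\xi^j)$ down to $\fp$ is $\mu(n)$ reduced modulo $p$. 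The tower formula then yields
\[
\Tr_{\fq/\fp}(\xi^j) \;=\; [\fq:\fp(\xi^j)]\cdot\Tr_{\fp(\xi^j)/\fp}(\xi^j) \;=\; \frac{e\,\mu(n)}{\phi(n)},
\]
where $e=\phi(2\ell^m)=(\ell-1)\ell^{m-1}$ and the right-hand side is an integer since $\phi(n)\mid e$ whenever $\mu(n)\neq 0$.

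The final step is a case analysis on $n$: only the squarefree values $n\in\{1,2,\ell,2\ell\}$ give a nonzero trace. Direct inspection of the gcd condition shows that $n=1 \Leftrightarrow j=0$ (trace $(\ell-1)\ell^{m-1}$); $n=2 \Leftrightarrow j=\ell^m$ (trace $-(\ell-1)\ell^{m-1}$); $n=\ell \Leftrightarrow j=2k\ell^{m-1}$ for some $1\le k\le\ell-1$ (trace $-\ell^{m-1}$); and $n=2\ell \Leftrightarrow j=k\ell^{m-1}$ with $k$ odd, $k\ne\ell$, $1\le k\le 2\ell-1$ (trace $\ell^{m-1}$). The main obstacle, more tedious than deep, is verifying that the last two families of $j$ coincide exactly with $J_1^{(3)}\cup J_3^{(3)}$ and $J_1^{(2)}\cup J_3^{(2)}$ respectively; this relies on $\ell^{m-1}$ being odd, so that the parity of $k\ell^{m-1}$ matches the parity of $k$, allowing the conditions ``$2\mid k$'' and ``$2\nmid k$'' to translate into the even/odd placement of $k\ell^{m-1}$ within the intervals defining $J_1$ and $J_3$.
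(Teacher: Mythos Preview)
Your argument is correct. Note, however, that the paper does not actually prove Lemma~\ref{lem2.2}: it is quoted verbatim from \cite[Lemma~2.8]{[CG]} with no argument given in the present paper, so there is no in-paper proof to compare against.

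Your route via the M\"obius identity is a clean, self-contained proof. The three ingredients---(i) $p$ being a primitive root modulo every divisor $n>1$ of $2\ell^m$ (so $[\fp(\xi^j):\fp]=\phi(n)$ and the Frobenius orbit of $\xi^j$ is exactly the set of primitive $n$-th roots of unity), (ii) the classical relation $\sum_{\zeta}\zeta=\mu(n)$, and (iii) the tower formula---combine to give $\Tr(\xi^j)=\dfrac{e\,\mu(n)}{\phi(n)}$ with $n=2\ell^m/\gcd(j,2\ell^m)$. Your case analysis is accurate: the squarefree divisors $n\in\{1,2,\ell,2\ell\}$ produce the four nonzero trace values, and the identification of $\{j:n=\ell\}$ with $J_1^{(3)}\cup J_3^{(3)}$ and of $\{j:n=2\ell\}$ with $J_1^{(2)}\cup J_3^{(2)}$ is exactly right, the parity matching working because $\ell$ (hence $\ell^{m-1}$) is odd. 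One small remark: $\phi(n)\mid e$ holds for \emph{every} divisor $n$ of $2\ell^m$, not only when $\mu(n)\ne 0$, since $\fp(\xi^j)\subseteq\fq$ always; so the formula $e\mu(n)/\phi(n)$ is a genuine integer in all cases.
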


\begin{cor}\label{cor2.3}
Let $S_{2\ell^{m}}(a)$ be the second sum defined as in \eqref{c2-1}. The following statements are true.
\begin{enumerate}[(a)]
\item \cite[Corollary 2.3]{[Che]} For any $a\in\fp$, we have
$$S_{2\ell^m}(a)=\zeta_p^{\ell^{m-1}(\ell-1)a}+
\zeta_p^{-\ell^{m-1}(\ell-1)a}+(\ell-1)(\zeta_p^{\ell^{m-1}a}+
\zeta_p^{-\ell^{m-1}a})+2\ell^m-2\ell.$$
\item For any $j\in J$ and any $a\in\fp$, we have $S_{2\ell^m}(a\xi^{j})=S_{2\ell^m}(a)$.
\end{enumerate}
\end{cor}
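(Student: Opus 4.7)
Part (a) is simply quoted from \cite{[Che]}, so the only content is part (b). The plan is to observe that the sum $S_{2\ell^m}(a)$ is taken over a full set of representatives of the group of $2\ell^m$-th roots of unity in $\fq^*$, and that multiplication by $\xi^j$ merely permutes this group.

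More precisely, I would first unfold the definition:
\[
S_{2\ell^m}(a\xi^j)=\sum_{i=0}^{2\ell^m-1}\chi\bigl((a\xi^j)\xi^i\bigr)=\sum_{i=0}^{2\ell^m-1}\chi(a\xi^{i+j}).
\]
Then I would invoke the fact that $\xi=g^{(q-1)/(2\ell^m)}$ has exact order $2\ell^m$ in $\fq^*$; this is guaranteed because $p$ is primitive modulo $2\ell^m$, so $2\ell^m\mid q-1$. Consequently, $\xi^{2\ell^m}=1$, and as $i$ ranges over $\{0,1,\ldots,2\ell^m-1\}$, so does the reduction $i':=(i+j)\bmod 2\ell^m$. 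Substituting $i'$ for $i$ gives
\[
S_{2\ell^m}(a\xi^j)=\sum_{i'=0}^{2\ell^m-1}\chi(a\xi^{i'})=S_{2\ell^m}(a),
\]
which is the claim.

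There is no substantive obstacle: the whole statement is a reindexing over a cyclic group orbit, and neither the hypothesis $a\in\fp$ nor the partition structure of $J$ developed in Lemma \ref{lem2.2} is actually needed for part (b); the identity holds for any $a\in\fq$. The restriction $a\in\fp$ is presumably recorded only to match the context in which the corollary is used alongside part (a). I would therefore present the argument as a one-line change of variables, noting in passing that the same reasoning shows invariance under $a\mapsto a\eta$ for any $2\ell^m$-th root of unity $\eta\in\fq^*$.
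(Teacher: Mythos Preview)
Your proposal is correct and matches the paper's approach: the paper simply says ``Item (b) is immediate from the definition of $S_{2\ell^{m}}(a)$,'' and your reindexing argument is precisely the unpacking of that remark. Your observation that the hypothesis $a\in\fp$ is unnecessary for part (b) is also correct.
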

\begin{proof}
Item (b) is immediate from the definition of $S_{2\ell^{m}}(a)$.
\end{proof}

\section{The weight distribution of $\C_{\alpha,\beta}$}

Let $\alpha\in\fp$ and $\beta\in\fq$. Let $D$ be defined as in \eqref{c1-2}, and let the code $\C_{{\alpha,\beta}}$ be defined as in \eqref{c1-1}. In this ection, we shall present results on the weight distribution of $\C_{\alpha,\beta}$.  Recall that $g$ is the primitive element of $\fq$ and $\xi=g^{\frac{q-1}{2\ell^{m}}}$. For $b\in\fq^{*}$, let ${\rm Ind}(b)$ be the index of $b$ with respect to the base $g$, and define $j_{b}$ to be the least nonnegative residue of $-{\rm Ind}(b)$ modulo $2\ell^{m}$. First, we provide the following result on evaluations of $S_{2\ell^{m}}(a,b)$.
\begin{prop}\label{prop3.1}
Let $b\in\fq^{*}$. Each of the following is true.
\begin{enumerate}[(a)]
\item We have 
$$S_{2\ell^{m}}(1,b)=\begin{cases}
\zeta_{p}^{(\ell-1)\ell^{m-1}}\sqrt{q}-\frac{\sqrt{q}+1}{2\ell^{m}}S_{2\ell^{m}}(1),&\text{if}\ j_{b}=0,\\
\zeta_{p}^{-(\ell-1)\ell^{m-1}}\sqrt{q}-\frac{\sqrt{q}+1}{2\ell^{m}}S_{2\ell^{m}}(1),& \text{if}\ j_{b}=\ell^{m},\\
\zeta_{p}^{-\ell^{m-1}}\sqrt{q}-\frac{\sqrt{q}+1}{2\ell^{m}}S_{2\ell^{m}}(1),& \text{if}\ j_{b}\in J_{1}^{(3)}\cup J_{3}^{(3)},\\
\zeta_{p}^{\ell^{m-1}}\sqrt{q}-\frac{\sqrt{q}+1}{2\ell^{m}}S_{2\ell^{m}}(1),& \text{if}\ j_{b}\in J_{1}^{(2)}\cup J_{3}^{(2)},\\
\sqrt{q}-\frac{\sqrt{q}+1}{2\ell^{m}}S_{2\ell^{m}}(1),& \text{otherwise},
\end{cases}$$
where $S_{2\ell^{m}}(1)=\zeta_p^{\ell^{m-1}(\ell-1)}+
\zeta_p^{-\ell^{m-1}(\ell-1)}+(\ell-1)(\zeta_p^{\ell^{m-1}}+
\zeta_p^{-\ell^{m-1}})+2\ell^m-2\ell$.
\item If we treat $S_{2\ell^{m}}(1,b)$ as a function of $\zeta_{p}$, denoted by $S_{2\ell^{m}}(1,b;\zeta_{p})$, then 
$$S_{2\ell^{m}}(a,b)=S_{2\ell^{m}}(1,b;\zeta_{p}^{a})$$
for any $a\in\fp^{*}$.
\item For any $a\in\fq$ and any $x\in\fp^{*}$ we have $S_{2\ell^{m}}(a,b)=S_{2\ell^{m}}(a,xb)$.
\end{enumerate}
\end{prop}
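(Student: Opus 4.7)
The plan is to derive all three items by feeding the setup into Lemma~\ref{lem2.1}(b) and then unwinding the right-hand side with Lemma~\ref{lem2.2} and Corollary~\ref{cor2.3}. The key identification is that for $b\in\fq^{*}$, one has $b^{-(q-1)/(2\ell^{m})}=\xi^{j_{b}}$: indeed $b=g^{{\rm Ind}(b)}$ and $g^{(q-1)/(2\ell^{m})}=\xi$, while $j_{b}$ is by definition $-{\rm Ind}(b)$ reduced modulo $2\ell^{m}$.

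For part (a), setting $a=1$ in Lemma~\ref{lem2.1}(b) and applying this identification gives
\[
S_{2\ell^{m}}(1,b)=\chi(\xi^{j_{b}})\sqrt{q}-\frac{\sqrt{q}+1}{2\ell^{m}}S_{2\ell^{m}}(\xi^{j_{b}}).
\]
Now $\chi(\xi^{j_{b}})=\zeta_{p}^{{\rm Tr}(\xi^{j_{b}})}$, and Lemma~\ref{lem2.2} delivers exactly the five cases for ${\rm Tr}(\xi^{j_{b}})$ that appear in the statement, while Corollary~\ref{cor2.3}(b) simplifies $S_{2\ell^{m}}(\xi^{j_{b}})$ to $S_{2\ell^{m}}(1)$, whose explicit form comes from Corollary~\ref{cor2.3}(a). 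Part (b) follows from the same computation with $a\in\fp^{*}$: Corollary~\ref{cor2.3}(b) gives $S_{2\ell^{m}}(a\xi^{j_{b}})=S_{2\ell^{m}}(a)$, and both $\chi(a\xi^{j_{b}})=\zeta_{p}^{a{\rm Tr}(\xi^{j_{b}})}$ and the explicit form of $S_{2\ell^{m}}(a)$ in Corollary~\ref{cor2.3}(a) are obtained from the corresponding expressions for $a=1$ by replacing every $\zeta_{p}$ with $\zeta_{p}^{a}$, which is exactly the meaning of $S_{2\ell^{m}}(1,b;\zeta_{p}^{a})$.

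For part (c), Lemma~\ref{lem2.1}(b) applied to $xb$ gives $(xb)^{-(q-1)/(2\ell^{m})}=x^{-(q-1)/(2\ell^{m})}b^{-(q-1)/(2\ell^{m})}$, so it suffices to prove $x^{(q-1)/(2\ell^{m})}=1$ for every $x\in\fp^{*}$, equivalently $(p-1)\mid(q-1)/(2\ell^{m})$, or $2\ell^{m}\mid 1+p+\cdots+p^{e-1}$ where $e=\phi(2\ell^{m})$. Since $p$ is a primitive root modulo $2\ell^{m}$, the residues $1,p,\ldots,p^{e-1}$ modulo $2\ell^{m}$ run exactly through the units of $\Z/2\ell^{m}\Z$. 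These units pair up as $a\leftrightarrow 2\ell^{m}-a$ (no unit is fixed, as $\ell^{m}$ fails to be a unit modulo $2\ell^{m}$), so their sum is divisible by $2\ell^{m}$, which gives the required divisibility.

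The bulk of the argument is mere bookkeeping, so the only genuinely delicate point is the divisibility claim that underpins part (c); once it is in hand, everything reduces to substitution in Lemma~\ref{lem2.1}(b).
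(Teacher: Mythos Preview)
Your treatment of parts (a) and (b) is essentially identical to the paper's: both identify $b^{-(q-1)/(2\ell^{m})}=\xi^{j_{b}}$, invoke Lemma~\ref{lem2.1}(b), and then read off the cases from Lemma~\ref{lem2.2} together with Corollary~\ref{cor2.3}.

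For part (c) you take a genuinely different route. The paper factors $q-1=(\sqrt{q}-1)(\sqrt{q}+1)$, observes $\gcd(\sqrt{q}-1,\sqrt{q}+1)=2$, and argues by contradiction (using primitivity of $p$ modulo $2\ell^{m}$) that $\ell^{m}\mid\sqrt{q}+1$; since $(p-1)\mid\sqrt{q}-1$, the divisibility $(p-1)\mid(q-1)/(2\ell^{m})$ follows. You instead rewrite the claim as $2\ell^{m}\mid 1+p+\cdots+p^{e-1}$ and note that, because $p$ is a primitive root, these powers are exactly the units modulo $2\ell^{m}$, which pair off as $a\leftrightarrow 2\ell^{m}-a$ with no fixed point. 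Your argument is cleaner and avoids the case analysis, while the paper's approach has the incidental benefit of establishing the stronger structural fact $2\ell^{m}\mid\sqrt{q}+1$, which explains why the quantity $\frac{\sqrt{q}+1}{2\ell^{m}}$ appearing throughout the formulas is an integer. Both arguments are correct.
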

\begin{proof}
Let $b \in\fq^{*}$, and let $j_b$ be the least nonnegative residue of $-\mathrm{Ind}(b)$ modulo $2\ell^m$. Then, we have $b^{-\frac{q-1}{2\ell^m}} = (g^{\frac{q-1}{2\ell^m}})^{-\mathrm{Ind}(b)} = \xi^{-\mathrm{Ind}(b)} = \xi^{j_b}$, since $\xi$ is a primitive $2\ell^m$-th root of unity. Thus, Item (a) follows directly from Lemma \ref{lem2.1}(a) and Lemma \ref{lem2.2}. For Item (b), consider $a \in \mathbb{F}_p^{*}$. It follows that $a b^{-\frac{q-1}{2\ell^m}} = a \xi^{j_b}$. On the one hand, we compute $\chi(a \xi^{j_b}) = \zeta_p^{\mathrm{Tr}(a \xi^{j_b})} = \zeta_p^{a \mathrm{Tr}(\xi^{j_b})}$. On the other hand, Corollary \ref{cor2.3}(b) implies that $S_{2\ell^m}(a \xi^{j_b}) = S_{2\ell^m}(a)$. Therefore, Item (b) is established by Lemma \ref{lem2.1}(b) and Corollary \ref{cor2.3}(a). Finally, for Item (c), it suffices to show that $x^{-\frac{q-1}{2\ell^m}} = 1$ for all $x \in \mathbb{F}_p^{*}$, which is equivalent to the condition $(p-1) \mid \frac{q-1}{2\ell^m}$. Since $q = p^e$ with $e$ being even, we can express $\frac{q-1}{2\ell^m} = \frac{(\sqrt{q} + 1)(\sqrt{q} - 1)}{2\ell^m}$ and note that $\gcd(\sqrt{q} + 1, \sqrt{q} - 1) = 2$. This implies that $\ell^m$ must divide either $\sqrt{q} + 1$ or $\sqrt{q} - 1$. Suppose $\ell^m \mid (\sqrt{q} - 1)$; then $2\ell^m \mid (\sqrt{q} - 1)$. This would imply that the order of $p$ modulo $2\ell^m$ is at most $\frac{e}{2}$, contradicting the assumption that $p$ is a primitive root modulo $2\ell^m$. Hence, $\ell^m \mid (\sqrt{q} + 1)$, so $\frac{\sqrt{q} + 1}{2\ell^m}$ is an integer. Consequently, $(p-1) \mid \frac{(\sqrt{q} + 1)(\sqrt{q} - 1)}{2\ell^m}$, since $(p-1) \mid (\sqrt{q} - 1)$. This completes the proof of Proposition \ref{prop3.1}.
\end{proof}

For  $x\in\fq$, define a useful notation $w(\alpha,x)$ by
\begin{align}\label{c3-1}
w(\alpha,x)=\sum_{y\in\fp^{*}}\zeta_{p}^{-y\alpha}S_{2\ell^{m}}(y,xy).
\end{align}
Clearly, from Proposition \ref{prop3.1}(c) we know that $w(\alpha,xz)=w(\alpha,x)$ for any $z\in\fp^{*}$ and $x\in\fq$. Moreover, we have the following explicit results.
\begin{prop}\label{prop3.2}
Let $\alpha\in\fp$ and $x\in\fq$. Then the explicit value of $w(\alpha,x)$ is given as follows.
\begin{enumerate}[(a)]
\item When $\alpha=x=0$. Then we have
$$w(0,0)=\begin{cases}
\frac{q-1}{\ell^{m}}\left((p-1)\ell^{m}-p\ell+p\right),&\text{if}\ \ell\equiv 1\pmod{p},\\
\frac{q-1}{\ell^{m-1}}\left((p-1)\ell^{m-1}-p\right),&\text{if}\ \ell\not\equiv 1\pmod{p}.
\end{cases}$$
\item When $\alpha=0$ and $x\ne 0$. The two statements are true.
\begin{enumerate}[(i)]
\item If $\ell\equiv 1\pmod{p}$, then
$$w(0,x)=\begin{cases}
\frac{\sqrt{q}+1}{\ell^{m}}\left(-p\ell^{m}+p\ell-p\right)+1,&\text{if } j_{x} \in J_{1}^{(2)} \cup J_{3}^{(2)} \cup J_{1}^{(3)} \cup J_{3}^{(3)},\\
\frac{\sqrt{q}+1}{\ell^{m}}\left(p\ell-p\right)-p+1,&\text{otherwise}.
\end{cases}$$
\item If $\ell\not\equiv 1\pmod{p}$, then
$$w(0,x)=\begin{cases}
\frac{p(\sqrt{q}+1)}{\ell^{m-1}}-p+1,&\text{if } j_{x} \in J_{1}^{(1)} \cup J_{3}^{(1)} \cup J_{2}\setminus\{\ell^{m}\},\\
\frac{\sqrt{q}+1}{\ell^{m-1}}\left(-p\ell^{m-1}+p\right)+1,&\text{otherwise}.
\end{cases}$$
\end{enumerate}
\item When $\alpha\ne 0$. Let $a_{1}=\ell^{m-1}(\ell-1)-\alpha$, $a_{2}=-\ell^{m-1}(\ell-1)-\alpha$, $a_{3}=\ell^{m-1}-\alpha$ and $a_{4}=-\ell^{m-1}-\alpha$. The following  statements are true.
\begin{enumerate}[(i)]
\item If $x=0$, then
$$w(\alpha,0)=\begin{cases}
1-q,&\text{if}\ a_{1}, a_{2}, a_{3}\ \text{and}\ a_{4}\ \text{are\ nonzero} ,\\
\frac{q-1}{2\ell^{m}}\left(-2\ell^{m}+p\right),&\text{if}\ \text{one of}\ a_{1}\ \text{and}\ a_{2}\ \text{is\ zero}\ \text{but}\ a_{3}\ne 0, a_{4}\ne 0,\\
\frac{q-1}{2\ell^{m}}\left(-2\ell^{m}+p\ell-p\right),&\text{if}\ \text{one of}\ a_{3}\ \text{and}\ a_{4}\ \text{is\ zero}\ \text{but}\ a_{1}\ne 0, a_{2}\ne 0,\\
\frac{q-1}{2\ell^{m-1}}\left(-2\ell^{m-1}+p\right),&\text{if}\ a_{1}=a_{3}=0\ \text{or}\ \ a_{2}=a_{4}=0.\end{cases}$$
\item Let $x\ne 0$. Then $w(\alpha,x)=1$ if none of $\ a_{1}, a_{2}, a_{3} $ and $a_{4}$ are zero. Furthermore, we have the following results.
\begin{enumerate}[(1)]
\item If $a_{1}=0$ but all of $a_{2},a_{3}$ and $a_{4}$ are nonzero, then
$$w(\alpha,x)=\begin{cases}
\frac{p(\sqrt{q}+1)}{2\ell^{m}}(2\ell^{m}-1)-p+1,&\text{if } j_{x}=0,\\
\frac{-p(\sqrt{q}+1)}{2\ell^{m}}+1,&\text{otherwise}.
\end{cases}$$
\item If $a_{2}=0$ but all of $a_{1},a_{3}$ and $a_{4}$ are nonzero, then
$$w(\alpha,x)=\begin{cases}
\frac{p(\sqrt{q}+1)}{2\ell^{m}}(2\ell^{m}-1)-p+1,&\text{if } j_{x}=\ell^{m},\\
\frac{-p(\sqrt{q}+1)}{2\ell^{m}}+1,&\text{otherwise}.
\end{cases}$$
\item If $a_{3}=0$ but all of $a_{1},a_{2}$ and $a_{4}$ are nonzero, then
$$w(\alpha,x)=\begin{cases}
\frac{p(\sqrt{q}+1)}{2\ell^{m}}(2\ell^{m}-\ell+1)-p+1,&\text{if } j_{x}\in J_{1}^{(2)}\cup J_{3}^{(2)},\\
\frac{-p(\sqrt{q}+1)}{2\ell^{m}}(\ell-1)+1,&\text{otherwise}.
\end{cases}$$
\item If $a_{4}=0$ but all of $a_{1},a_{2}$ and $a_{3}$ are nonzero, then
$$w(\alpha,x)=\begin{cases}
\frac{p(\sqrt{q}+1)}{2\ell^{m}}(2\ell^{m}-\ell+1)-p+1,&\text{if } j_{x}\in J_{1}^{(3)}\cup J_{3}^{(3)},\\
\frac{-p(\sqrt{q}+1)}{2\ell^{m}}(\ell-1)+1,&\text{otherwise}.
\end{cases}$$
\item If $a_{1}=a_{3}=0$ but $a_{2}\ne 0, a_{4}\ne 0$, then
$$w(\alpha,x)=\begin{cases}
\frac{p(\sqrt{q}+1)}{2\ell^{m-1}}(2\ell^{m-1}-1)-p+1,&\text{if } j_{x}\in \{0\}\cup J_{1}^{(2)}\cup J_{3}^{(2)},\\
\frac{-p(\sqrt{q}+1)}{2\ell^{m-1}}+1,&\text{otherwise}.
\end{cases}$$
\item If $a_{2}=a_{4}=0$ but $a_{1}\ne 0, a_{3}\ne 0$, then
$$w(\alpha,x)=\begin{cases}
\frac{p(\sqrt{q}+1)}{2\ell^{m-1}}(2\ell^{m-1}-1)-p+1,&\text{if } j_{x}\in \{\ell^{m}\}\cup J_{1}^{(3)}\cup J_{3}^{(3)},\\
\frac{-p(\sqrt{q}+1)}{2\ell^{m-1}}+1,&\text{otherwise}.
\end{cases}$$
\end{enumerate}
\end{enumerate}
\end{enumerate}
\end{prop}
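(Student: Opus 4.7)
The plan is to expand $S_{2\ell^m}(y,xy)$ using Lemmas 2.1 and 2.2 together with Corollary 2.3 and the key identity $y^{-(q-1)/(2\ell^m)}=1$ for $y\in\fp^*$ (already established in the proof of Proposition 3.1(c)), so that the entire expression for $w(\alpha,x)$ reduces to a linear combination of elementary orthogonality sums $\sum_{y\in\fp^*}\zeta_p^{cy}$. Each such sum equals $p-1$ when $p\mid c$ and $-1$ otherwise, and the many case splits in the statement arise precisely from which residues $c$ in the exponents vanish modulo $p$.

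For $x=0$ (covering part (a) and part (c)(i)), Lemma 2.1(a) gives $S_{2\ell^m}(y,0)=\frac{q-1}{2\ell^m}S_{2\ell^m}(y)$, so that
\[
w(\alpha,0)=\frac{q-1}{2\ell^m}\sum_{y\in\fp^*}\zeta_p^{-\alpha y}S_{2\ell^m}(y).
\]
After substituting the explicit formula from Corollary 2.3(a), the factor $\zeta_p^{-\alpha y}$ shifts each non-constant exponent by $-\alpha$, producing precisely the residues $a_1,a_2,a_3,a_4$. When $\alpha=0$, the dichotomy $\ell\equiv 1\pmod p$ determines whether the sums $\sum_y\zeta_p^{\pm(\ell-1)\ell^{m-1}y}$ contribute $p-1$ or $-1$, while $\sum_y\zeta_p^{\pm\ell^{m-1}y}$ always contributes $-1$ since $\gcd(\ell,p)=1$. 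When $\alpha\ne 0$, each $a_i\equiv 0\pmod p$ converts one $-1$ into a $p-1$, while the constant summand $(2\ell^m-2\ell)\sum_y\zeta_p^{-\alpha y}$ equals $-(2\ell^m-2\ell)$; every clause of (a) and (c)(i) then follows by a brief arithmetic simplification.

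For $x\ne 0$ (covering parts (b) and (c)(ii)), the plan is to apply Lemma 2.1(b) with $b=xy$, observe that $(xy)^{-(q-1)/(2\ell^m)}=x^{-(q-1)/(2\ell^m)}=\xi^{j_x}$ is independent of $y$, and then use Corollary 2.3(b) to reduce $S_{2\ell^m}(y\xi^{j_x})$ to $S_{2\ell^m}(y)$. Together with $\chi(y\xi^{j_x})=\zeta_p^{y\mathrm{Tr}(\xi^{j_x})}$ this yields
\[
w(\alpha,x)=\sqrt{q}\sum_{y\in\fp^*}\zeta_p^{y(\mathrm{Tr}(\xi^{j_x})-\alpha)}-\frac{\sqrt{q}+1}{2\ell^m}\sum_{y\in\fp^*}\zeta_p^{-\alpha y}S_{2\ell^m}(y).
\]
The second sum is exactly the one treated above, and the first equals $p-1$ or $-1$ according to whether $\mathrm{Tr}(\xi^{j_x})\equiv\alpha\pmod p$. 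By Lemma 2.2, this condition partitions $J$ into at most five classes according to $j_x$, producing the $j_x$-dependent branches displayed in the statement.

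The principal obstacle is the case bookkeeping: for every configuration of vanishing/nonvanishing $a_i$'s one must determine the exact set of $j_x$ for which $\mathrm{Tr}(\xi^{j_x})\equiv\alpha\pmod p$, and verify that this set is compatible with the assumed configuration. For instance, $a_1=0$ forces $\alpha=(\ell-1)\ell^{m-1}$, so no other value of $\mathrm{Tr}(\xi^{j_x})$ matches this residue unless a further $a_j$ also vanishes; the only compatible two-zero configurations turn out to be $a_1=a_3=0$ and $a_2=a_4=0$, both forcing $\ell\equiv 2\pmod p$, while $a_1=a_4=0$ would demand $\ell\equiv 0\pmod p$ and is therefore impossible. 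Once these compatibility relations are catalogued, each listed case reduces to a routine application of orthogonality that delivers the claimed closed form.
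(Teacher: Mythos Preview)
Your proposal is correct and follows essentially the same route as the paper: reduce $w(\alpha,x)$ via Lemma~2.1 and Corollary~2.3 to a finite combination of sums $\sum_{y\in\fp^*}\zeta_p^{cy}$, then invoke the orthogonality relation \eqref{c3-2} and Lemma~2.2 to handle the $j_x$-dependent branch. The paper organizes the $x\ne0$ case by recognizing the second sum as $\frac{\sqrt{q}+1}{q-1}w(\alpha,0)$ rather than re-expanding it, and for part (b) it routes through Proposition~3.1(b) instead of applying Lemma~2.1(b) directly with $b=xy$; but these are cosmetic rearrangements of the same computation, and your explicit treatment of the $a_i$-compatibility constraints (in particular ruling out $a_1=a_4=0$) makes the case enumeration in (c) slightly more transparent than the paper's one-line ``one observes that there are only three possible cases.''
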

\begin{proof}
First, let us prove Item (a). By the definition and Lemma \ref{lem2.1}(a) we have
$$w(0,0)=\sum_{x\in\fp^*}S_{2\ell^m}(y,0)=\frac{q-1}{2\ell^m}\sum_{y\in\fp^*}S_{2\ell^m}(y).$$
Applying Corollary \ref{cor2.3}(a) and noting that 
\begin{align}\label{c3-2}
\sum_{y\in\fp^*}\zeta_{p}^{yz}=\begin{cases}
p-1,&\text{if}\ z=0,\\
-1,&\text{if}\ z\in\fp^*,
\end{cases}
\end{align}
the desired result follows.

For Item (b), let $\alpha=0$ and $x\in\fq^*$. By the definition and Proposition \ref{prop3.1}(b) we have
$$w(0,x)=\sum_{y\in\fp^*}S_{2\ell^m}(y,x)=\sum_{y\in\fp^*}S_{2\ell^m}(1,x;\zeta_p^y).$$
It then follows from Lemma \ref{lem2.1}(b) and Corollary \ref{cor2.3}(b) that
$$w(0,x)=\sum_{y\in\fp^*}\left(\sqrt{q}\zeta_p^{y{\rm Tr}(\xi^{j_x})}-\frac{\sqrt{q}+1}{2\ell^m}S_{2\ell^m}(y)\right).$$
Thus, by substituting the results of Lemma \ref{lem2.2} and Corollary \ref{cor2.3}(a), along with \eqref{c3-2}, the result is obtained.

Finally, we prove Item (c). Let $\alpha\ne 0$, and let $a_{1}=\ell^{m-1}(\ell-1)-\alpha$, $a_{2}=-\ell^{m-1}(\ell-1)-\alpha$, $a_{3}=\ell^{m-1}-\alpha$, $a_{4}=-\ell^{m-1}-\alpha$. For such elements $a_1, a_2, a_3, a_4 \in \fp$, one observes that there are only three possible cases: (1) all are nonzero; (2) exactly one is zero while the others are nonzero; (3) either $a_1 = a_3 = 0$ with $a_2 \neq 0$ and $a_4 \neq 0$, or $a_2 = a_4 = 0$ with $a_1 \neq 0$ and $a_3 \ne 0$. First, by the definition and lemma \ref{lem2.1}(a) we have
\begin{align}\label{c3-3}
w(\alpha,0)=\sum_{y\in\fp^*}\zeta_p^{-y\alpha}S_{2\ell^m}(y,0)=\frac{q-1}{2\ell^m}\sum_{y\in\fp^*}\zeta_p^{-y\alpha}S_{2\ell^m}(y).
\end{align}
Putting the evaluation of $S_{2\ell^m}(y)$ presented in Lemma \ref{cor2.3}(a) into \eqref{c3-3}, we derive that
\begin{align*}w(\alpha,0)=&\frac{q-1}{2\ell^m}\sum_{y\in\fp^*}\Big(\zeta_p^{y(\ell^{m-1}(\ell-1)-\alpha)}+
\zeta_p^{y(-\ell^{m-1}(\ell-1)-\alpha)}\\
&+(\ell-1)(\zeta_p^{y(\ell^{m-1}-\alpha)}+\zeta_p^{y(-\ell^{m-1}-\alpha)})+(2\ell^m-2\ell)\zeta_p^{-y\alpha}\Big).
\end{align*}
By using Equation \eqref{c3-2}, the result of $w(\alpha,0)$ is obtained. Now let $x\in\fq^*$ and let $j_x$ be defined as before. From the definition as in \eqref{c3-1}, Lemma \ref{lem2.1}(b) and Corollary \ref{cor2.3}, we have
\begin{align*}
w(\alpha,x)&=\sqrt{q}\sum_{y\in\fp^*}\zeta_p^{-y\alpha}\chi(y\xi^{j_x})-\frac{\sqrt{q}+1}{2\ell^m}\sum_{y\in\fp^*}\zeta_p^{-y\alpha}S_{2\ell^m}(y\xi^{j_x})\\
&=\sqrt{q}\sum_{y\in\fp^*}\left(\zeta_p^{{\rm Tr}(\xi^{j_x})-\alpha}\right)^{y}-\frac{\sqrt{q}+1}{q-1}w(\alpha,0).
\end{align*}
The desired results follow from Lemma \ref{lem2.2}, the result of $w(\alpha,0)$ and Equation \eqref{c3-2}. 

It completes the proof of Proposition \ref{prop3.2}.
\end{proof}

\begin{prop}\label{prop3.3}
Let $\C_{{\alpha,\beta}}$ be the code whose defining set $D$ is given in \eqref{c1-2}. The following statements hold.
\begin{enumerate}[(a)]
\item The code $\C_{{\alpha,\beta}}$ is of length
$$n=\frac{1}{p}\left(q-1+w(\alpha,\beta)\right).$$
\item Let $D=\{d_{1},\ldots,d_{n}\}$ and $c_{\gamma}=({\rm Tr}(\gamma d_{1}),\ldots,{\rm Tr}(\gamma d_{n}))\in\C_{\alpha,\beta}$ with $\gamma\in\fq^{*}$. Then the Hamming weight of $c_{\gamma}$, denoted by $\text{wt}(c_{\gamma})$, is
$$\text{wt}(c_{\gamma})=\frac{1}{p^{2}}\Big((p-1)q+(p-1)w(\alpha,\beta)-\sum_{z\in\fp^{*}}w(\alpha,\beta+z\gamma)\Big).$$
\end{enumerate}
\end{prop}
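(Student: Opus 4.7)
The plan is to prove both parts by the standard orthogonality identity
\[
\mathbf{1}\{{\rm Tr}(u)=c\} \;=\; \frac{1}{p}\sum_{y\in\fp}\zeta_p^{y({\rm Tr}(u)-c)},
\]
which converts the defining condition for $D$ (and, in part (b), also the condition ${\rm Tr}(\gamma x)=0$) into sums of additive characters. In each case the $y=0$ contribution is a constant, while the $y\neq 0$ part collapses to an exponential sum of the shape $S_{2\ell^m}(\cdot,\cdot)$ and can be reassembled into the quantity $w(\alpha,\beta)$ introduced in \eqref{c3-1}.

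For part (a), I would write
\[
n \;=\; |D| \;=\; \sum_{x\in\fq^*}\frac{1}{p}\sum_{y\in\fp}\zeta_p^{y({\rm Tr}(x^{(q-1)/(2\ell^m)}+\beta x)-\alpha)}.
\]
The $y=0$ term contributes $(q-1)/p$; for $y\in\fp^*$, swapping the order of summation gives $\frac{1}{p}\sum_{y\in\fp^*}\zeta_p^{-y\alpha}S_{2\ell^m}(y,y\beta)$, which is exactly $w(\alpha,\beta)/p$ by the definition of $w$. Adding the two contributions yields the claimed formula for $n$.

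For part (b), the Hamming weight of $c_\gamma$ equals $n - N_0$, where
\[
N_0 \;=\; \big|\{x\in\fq^*:\ {\rm Tr}(x^{(q-1)/(2\ell^m)}+\beta x)=\alpha\ \text{and}\ {\rm Tr}(\gamma x)=0\}\big|.
\]
Applying the orthogonality trick to both conditions converts $N_0$ into a double sum
\[
N_0 \;=\; \frac{1}{p^2}\sum_{y,z\in\fp}\zeta_p^{-y\alpha}\sum_{x\in\fq^*}\chi\!\bigl(y\,x^{(q-1)/(2\ell^m)}+(y\beta+z\gamma)x\bigr),
\]
which I would split according to whether $y$ and $z$ vanish. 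The cases $(y,z)=(0,0)$, $(0,z\neq 0)$ and $(y\neq 0,0)$ contribute $(q-1)/p^2$, $-(p-1)/p^2$ and $w(\alpha,\beta)/p^2$ respectively, using $\sum_{x\in\fq^*}\chi(z\gamma x)=-1$ for the second (since $\gamma\neq 0$) and the definition of $w$ for the third.

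The main, and only genuinely non-routine, step is the $y\neq 0, z\neq 0$ block, where the inner sum is $S_{2\ell^m}(y,\,y\beta+z\gamma)$. Here I factor $y\beta+z\gamma=y(\beta+(z/y)\gamma)$ and reindex $z\mapsto yz'$, which for each fixed $y\in\fp^*$ is a bijection of $\fp^*$. Swapping the $y$- and $z'$-summations then turns the inner $y$-sum into precisely $w(\alpha,\beta+z'\gamma)$, so this block equals $\frac{1}{p^2}\sum_{z\in\fp^*}w(\alpha,\beta+z\gamma)$. Collecting the four contributions gives
\[
N_0 \;=\; \frac{1}{p^2}\Bigl( (q-1)-(p-1)+w(\alpha,\beta)+\sum_{z\in\fp^*}w(\alpha,\beta+z\gamma)\Bigr),
\]
and subtracting this from $n=(q-1+w(\alpha,\beta))/p$ is a routine simplification that produces the stated weight formula.
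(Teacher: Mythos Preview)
Your proposal is correct and follows essentially the same route as the paper: both arguments use additive-character orthogonality to convert the defining conditions into sums of the shape $S_{2\ell^m}(\cdot,\cdot)$, and then reassemble them into $w(\alpha,\cdot)$. The only cosmetic differences are that you sum directly over $\fq^*$ (so the case split on $\alpha=0$ versus $\alpha\neq 0$ never appears), and you spell out explicitly the reindexing $z\mapsto yz'$ in the $y\neq 0,\,z\neq 0$ block, which the paper simply asserts.
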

\begin{proof}
We first determine $n$, the length of the code $\mathcal{C}_{\alpha,\beta}$, defined as $n = \#D$. To do this, we introduce $n_0=\#\{x\in\fq: \Tr(x^{\frac{q-1}{2\ell^m}}+\beta x)=\alpha\}$ and compute that
\begin{align*}
n_{0}&=\frac{1}{p}\sum_{x\in\fq}\sum_{y\in\fp}\zeta_{p}^{y({\rm Tr}(x^{\frac{q-1}{2\ell^m}}+\beta x)-\alpha)}\\
&=\frac{q}{p}+\frac{1}{p}\sum_{y\in\fp^*}\zeta_{p}^{-y\alpha}
\Big(\sum_{x\in\fq^*}\chi(yx^{\frac{q-1}{2\ell^m}}+y\beta x)+1\Big)\\
&=\frac{q}{p}+\frac{1}{p}\sum_{y\in\fp^*}\zeta_{p}^{-y\alpha}
\Big(S_{2\ell^m}(y,y\beta)+1\Big).
\end{align*}
By using Equation \eqref{c3-2} and noting that
$$n=\begin{cases}
n_{0}-1,&\text{if}\ \alpha=0,\\
n_{0},&\text{if}\ \alpha\ne 0.
\end{cases}$$
We then have $n=\frac{1}{p}\left(q-1+w(\alpha,\beta)\right)$, as desired. 

Next, let $c_{\gamma}:=(\Tr(\gamma d_1),\ldots,\Tr(\gamma d_n))\in\mathcal{C}_{\alpha,\beta}$ with $\gamma\in\fq^{*}$ and $\text{wt}(c_{\gamma})$ be the Hamming weight of $c_{\gamma}$. We introduce a notation $N_{\gamma}$ defined by
$$N_{\gamma}:=\#\{x\in\fq:\ \Tr(x^{\frac{q-1}{2\ell^m}}+\beta x)=\alpha\ \text{and}\ \Tr(\gamma x)=0\},$$
so that $\text{wt}(c_{\gamma})=n_0-N_{\gamma}$. By the definition, we have
\begin{align*}
N_{\gamma}&=\frac{1}{p^{2}}\sum_{x\in\fq}\sum_{y\in\fp}\zeta_{p}^{y\big({\rm Tr}(x^{\frac{q-1}{2\ell^m}}+\beta x)-\alpha\big)}\sum_{z\in\fp}\zeta_{p}^{z{\rm Tr}(\gamma x)}\\
&=\frac{1}{p^{2}}\sum_{x\in\fq}\Big(\sum_{y\in\fp^*}\zeta_{p}^{y\big({\rm Tr}(x^{\frac{q-1}{2\ell^m}}+\beta x)-\alpha\big)}+1\Big)\Big(\sum_{z\in\fp^*}\zeta_{p}^{z{\rm Tr}(\gamma x)}+1\Big)\\
&=\frac{1}{p^{2}}\Big(q+\sum_{z\in\fp^*}\sum_{x\in\fq}\chi(z\gamma x)+
\sum_{y\in\fp^*}\zeta_{p}^{-y\alpha}\sum_{x\in\fq}\chi(yx^{\frac{q-1}{2\ell^m}}+y\beta x)\\
&\quad\quad+\sum_{y\in\fp^*}\zeta^{-y\alpha}\sum_{z\in\fp^*}\sum_{x\in\fq}\chi(yx^{\frac{q-1}{2\ell^m}}+(y\beta+z\gamma)x)\Big).
\end{align*}
Note that 
\begin{align*}
&\sum_{x\in\fq}\chi(z\gamma x)=0\ \text{for\ any}\ z\in\fp^*\ \text{and}\ \gamma\in\fq^{*},\\
&\sum_{y\in\fp^*}\zeta_{p}^{-y\alpha}\sum_{x\in\fq^{*}}\chi(yx^{\frac{q-1}{2\ell^m}}+y\beta x)=w(\alpha,\beta),\ \text{and}\\
&\sum_{y\in\fp^*}\zeta^{-y\alpha}\sum_{z\in\fp^*}\sum_{x\in\fq^{*}}\chi(yx^{\frac{q-1}{2\ell^m}}+(y\beta+z\gamma)x)=\sum_{z\in\fp^{*}}w(\alpha,\beta+z\gamma).
\end{align*}
It then implies that
$$N_{\gamma}=\frac{1}{p^{2}}\Big(q+p\sum_{y\in\fp^{*}}\zeta_{p}^{-y\alpha}+w(\alpha,\beta)+\sum_{z\in\fp^{*}}w(\alpha,\beta+z\gamma)\Big).$$
Therefore, the result follows immediately from $\text{wt}(c_{\gamma})=n_0-N_{\gamma}$. The proof of Proposition \ref{c3-3} is finished.
\end{proof}
From Propositions \ref{prop3.2} and \ref{prop3.3}, we obtain the main results of this paper as follows.
\begin{thm}\label{thm3.4}
Let $\alpha\in\fp, \beta\in\fq$, and let $\C_{{\alpha,\beta}}$ be the code whose defining set $D$ is given in \eqref{c1-2} with $N=2\ell^m$. Define $a_{1}=\ell^{m-1}(\ell-1)-\alpha$, $a_{2}=-\ell^{m-1}(\ell-1)-\alpha$, $a_{3}=\ell^{m-1}-\alpha$ and $a_{4}=-\ell^{m-1}-\alpha$. We have that $\C_{\alpha,\beta}$ is of dimension $(\ell-1)\ell^{m-1}$ except that $\alpha\ne0, \beta=0$ and none of $a_1, a_2, a_3, a_4$ are zero, in which case $\C_{\alpha,0}$ is an empty set. Precisely:
\begin{enumerate}[(a)]
\item Suppose $\alpha=\beta=0$. Then $\C_{0,0}$ is a two-weight code. Specifically:
\begin{enumerate}[(i)]
\item If $\ell\equiv 1\pmod{p}$, then $\C_{0,0}$ has length $\frac{(q-1)(\ell^m-\ell+1)}{\ell^m}$, with exactly two nonzero weights $w_1=\frac{p-1}{p\ell^m}((\ell^m-\ell+1)q-(\ell-1)\sqrt{q})$ and $w_2=\frac{(p-1)(\ell^m-\ell+1)}{p\ell^m}(q+\sqrt{q})$, where the weight enumerators are $A_{w_1} =\frac{(q-1)(\ell^m-\ell+1)}{\ell^m}$ and $A_{w_2}=\frac{(q-1)(\ell-1)}{\ell^m}$.
\item If $\ell\not\equiv 1\pmod{p}$ and $m\ge 2$, then $\C_{0,0}$ has length $\frac{(q-1)(\ell^{m-1}-1)}{\ell^{m-1}}$, with exactly two nonzero weights $w_1=\frac{(p-1)(\ell^{m-1}-1)}{p\ell^{m-1}}(q+\sqrt{q})$ and $w_2=\frac{p-1}{p\ell^{m-1}}((\ell^{m-1}-1)q-\sqrt{q})$, where the weight enumerators are $A_{w_1} =\frac{q-1}{\ell^{m-1}}$ and $A_{w_2}=\frac{(q-1)(\ell^{m-1}-1)}{\ell^{m-1}}$.
\end{enumerate}
\item Suppose $\alpha\ne 0$ and $\beta=0$. Then $\C_{\alpha,0}$ is a two-weight code if none of $a_1, a_2, a_3, a_4$ are zero. Furthermore,
\begin{enumerate}[(i)]
\item If either $a_1$ or $a_2$ is zero and $a_3\ne0, a_4\ne 0$, then then $\C_{\alpha,0}$ has length $\frac{q-1}{2\ell^{m}}$, with exactly two nonzero weights $w_1=\frac{p-1}{2p\ell^{m}}(q-(2\ell^m-1)\sqrt{q})$ and $w_2=\frac{p-1}{2p\ell^{m}}(q+\sqrt{q})$, where the weight enumerators are $A_{w_1} =\frac{q-1}{2\ell^{m}}$ and $A_{w_2}=\frac{(q-1)(2\ell^m-1)}{2\ell^{m}}$. 
\item If either $a_3$ or $a_4$ is zero and $a_1\ne0, a_2\ne 0$, then then $\C_{\alpha,0}$ has length $\frac{(q-1)(\ell-1)}{2\ell^{m}}$, with exactly two nonzero weights $w_1=\frac{(p-1)(\ell-1)}{2p\ell^{m}}(q+\sqrt{q})$ and $w_2=\frac{p-1}{2p\ell^{m}}((\ell-1)q-(2\ell^m-\ell+1)\sqrt{q})$, where the weight enumerators are $A_{w_1} =\frac{q-1}{2\ell^{m}}(2\ell^m-\ell+1)$ and $A_{w_2}=\frac{(q-1)(\ell-1)}{2\ell^{m}}$. 
\item If either $a_1=a_3=0, a_2\ne 0, a_4\ne 0$ or $a_2=a_4=0, a_1\ne 0, a_3\ne 0$, then then $\C_{\alpha,0}$ has length $\frac{q-1}{2\ell^{m-1}}$, with exactly two nonzero weights $w_1=\frac{p-1}{2p\ell^{m-1}}(q-(2\ell^{m-1}-1)\sqrt{q})$ and $w_2=\frac{p-1}{2p\ell^{m-1}}(q+\sqrt{q})$, where the weight enumerators are $A_{w_1} =\frac{q-1}{2\ell^{m-1}}$ and $A_{w_2}=\frac{q-1}{2\ell^{m-1}}(2\ell^{m-1}-1)$.  
\end{enumerate}
\item Suppose $\alpha=0$ and $\beta\ne0$. Then $\C_{0,\beta}$ has at most $p+1$ nonzero weights, denoted by $w_i$ where $1 \le i \le p+1$. Additionally, the length $n$ of $\C_{0,\beta}$ and all possible nonzero weights $w_i$ are given below.
\begin{enumerate}[(i)]
\item If $\ell\equiv 1\pmod{p}$ and $j_{\beta}\in J_1^{(2)}\cup J_3^{(2)}\cup J_1^{3)}\cup J_3^{(3)}$, then we have that $n=\frac{q}{p}-\frac{\sqrt{q}+1}{\ell^m}(\ell^n-\ell+1)$, $w_1=\frac{1}{p\ell^m}\left((\ell-1)q-(\ell^m-\ell+1)\sqrt{q}\right)$, and $w_{k+2}=\frac{p-1}{p^2}q-\frac{p-1-k}{p}\sqrt{q}$ with $0\le k\le p-1$.
\item If $\ell\equiv 1\pmod{p}$ and $j_{\beta}\in \{0\}\cup J_2\cup J_1^{(1)}\cup J_3^{(1)}$, then we have that $n=\frac{q-p}{p}+\frac{(\sqrt{q}+1)(\ell-1)}{\ell^m}$, $w_1=\frac{\ell-1}{p\ell^m}\left(q+\sqrt{q}\right)$, and $w_{k+2}=\frac{p-1}{p^2}q+\frac{p-1-k}{p}\sqrt{q}$ with $0\le k\le p-1$.
\item If \(\ell \not\equiv 1 \pmod{p}\) and \(j_{\beta} \in \{0, \ell^m\} \cup J_1^{(2)} \cup J_3^{(2)} \cup J_1^{(3)}\cup J_3^{(3)}\), 
then we have that \(n = \frac{q}{p} + \frac{(\sqrt{q} + 1)(-\ell^{m-1} + 1)}{\ell^{m-1}}\), 
\(w_1 = \frac{1}{p \ell^{m-1}} q + \frac{1 - \ell^{m-1}}{p \ell^{m-1}} \sqrt{q}\), 
and \(w_{k+2} = \frac{p-1}{p^2} q - \frac{p-1-k}{p} \sqrt{q}\) with \(0 \le k \le p-1\).
\item If \(\ell \not\equiv 1 \pmod{p}\) and \(j_{\beta} \in J_2\setminus\{\ell^m\} \cup J_1^{(1)} \cup J_3^{(1)}\), 
then we have that \(n = \frac{q}{p} + \frac{\sqrt{q} + 1}{\ell^{m-1}}-1\), 
\(w_1 = \frac{1}{p \ell^{m-1}} (q + \sqrt{q})\), 
and \(w_{k+2} = \frac{p-1}{p^2} q + \frac{p-1-k}{p} \sqrt{q}\) with \(0 \le k \le p-1\).
\end{enumerate}
\item Suppose $\alpha\ne 0$ and $\beta\ne 0$. If $a_1,a_2,a_3$ and $a_4$ are nonzero, then $\C_{\alpha,\beta}$ is a two-weight code of length $n=\frac{q}{p}$ having two nonzero weights $w_1=\frac{q}{p}$ and $w_2=\frac{(p-1)q}{p^2}$ with weight enumerators $A_{w_1}=p-1$ and $A_{w_2}=q-p$; otherwise, $\C_{\alpha,\beta}$ has at most $p+1$ nonzero weights, denoted by $w_i$, $1 \le i \le p+1$, and the length $n$ of $\C_{\alpha,\beta}$ and all possible nonzero weights $w_i$ are given below.
\begin{enumerate}[(i)]
\item Assume that either $a_1$ or $a_2$ is zero, and $a_3\ne 0, a_4\ne 0$. If $a_1=0$ and $j_{\beta}=0$, or $a_2=0$ and $j_{\beta}=\ell^m$, then \(n = \frac{q-p}{p} + \frac{(\sqrt{q} + 1)(2\ell^m-1)}{2\ell^{m}}\), 
\(w_1 = \frac{2\ell^m-1}{2p \ell^{m}} (q + \sqrt{q})\), 
and \(w_{k+2} = \frac{p-1}{p^2} q + \frac{p-1-k}{p} \sqrt{q}\) with \(0 \le k \le p-1\). If $a_1=0$ and $j_{\beta}\ne 0$, or $a_2=0$ and $j_{\beta}\ne \ell^m$, then \(n = \frac{q}{p}-\frac{\sqrt{q} + 1}{2\ell^{m}}\), 
\(w_1 = \frac{2\ell^m-1}{2p \ell^{m}}q-\frac{p}{2p\ell^m}\sqrt{q}\), 
and \(w_{k+2} = \frac{p-1}{p^2} q - \frac{p-1-k}{p} \sqrt{q}\) with \(0 \le k \le p-1\).
\item Assume that either $a_3$ or $a_4$ is zero, and $a_1\ne 0, a_2\ne 0$. If $a_3=0$ and $j_{\beta}\not\in J_1^{(2)}\cup J_3^{(2)}$, or $a_4=0$ and $j_{\beta}\not\in J_1^{(3)}\cup J_3^{(3)}$, then \(n = \frac{q}{p} - \frac{(\sqrt{q} + 1)(\ell-1)}{2\ell^{m}}\), 
\(w_1 = \frac{2\ell^m-\ell+1}{2p \ell^{m}} q - \frac{\ell-1}{2p\ell^m}\sqrt{q}\), 
and \(w_{k+2} = \frac{p-1}{p^2} q - \frac{p-1-k}{p} \sqrt{q}\) with \(0 \le k \le p-1\). If $a_3=0$ and $j_{\beta}\in J_1^{(2)}\cup J_3^{(2)}$, or $a_4=0$ and $j_{\beta}\in J_1^{(3)}\cup J_3^{(3)}$, then \(n = \frac{q}{p}+\frac{(\sqrt{q} + 1)(2\ell^m-\ell+1)}{2\ell^{m}}\), 
\(w_1 = \frac{2\ell^m-\ell+1}{2p \ell^{m}}(q+\sqrt{q})\), 
and \(w_{k+2} = \frac{p-1}{p^2} q + \frac{p-1-k}{p} \sqrt{q}\) with \(0 \le k \le p-1\).
\item Assume that either $a_1=a_3=0$ and $a_2\ne 0, a_4\ne 0$, or $a_2=a_4=0$ and $a_1\ne 0, a_3\ne 0$. If $a_1=a_3=0$ and $j_{\beta}\in\{0\}\cup J_1^{(2)}\cup J_3^{(2)}$, or $a_2=a_4=0$ and $j_{\beta}\in\{\ell^m\}\cup J_1^{(3)}\cup J_3^{(3)}$, then \(n = \frac{q-p}{p} + \frac{(\sqrt{q} + 1)(2\ell^{m-1}-1)}{2\ell^{m-1}}\), 
\(w_1 = \frac{2\ell^{m-1}-1}{2p \ell^{m-1}}(q +\sqrt{q})\), 
and \(w_{k+2} = \frac{p-1}{p^2} q + \frac{p-1-k}{p} \sqrt{q}\) with \(0 \le k \le p-1\). If $a_1=a_3=0$ and $j_{\beta}\not\in\{0\}\cup J_1^{(2)}\cup J_3^{(2)}$, or $a_2=a_4=0$ and $j_{\beta}\not\in\{\ell^m\}\cup J_1^{(3)}\cup J_3^{(3)}$, then \(n = \frac{q}{p}-\frac{\sqrt{q} + 1}{2\ell^{m-1}}\), 
\(w_1 = \frac{2\ell^{m-1}-1}{2p \ell^{m-1}}q-\frac{1}{2\ell^{m-1}}\sqrt{q}\), 
and \(w_{k+2} = \frac{p-1}{p^2} q - \frac{p-1-k}{p} \sqrt{q}\) with \(0 \le k \le p-1\).
\end{enumerate}
\end{enumerate}
\end{thm}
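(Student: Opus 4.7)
The plan is to substitute the explicit evaluations from Proposition~\ref{prop3.2} into the length and weight formulas of Proposition~\ref{prop3.3}. Proposition~\ref{prop3.3}(a) gives $n=(q-1+w(\alpha,\beta))/p$, so reading off the appropriate sub-case of Proposition~\ref{prop3.2} immediately yields the claimed length in each part. The dimension claim reduces to observing that the $\fp$-linear evaluation map $\gamma\mapsto({\rm Tr}(\gamma d_1),\ldots,{\rm Tr}(\gamma d_n))$ is injective whenever $D$ spans $\fq$ over $\fp$; the only regime where this fails is $\alpha\ne0$, $\beta=0$ with all of $a_1,a_2,a_3,a_4$ nonzero, since Proposition~\ref{prop3.2}(c)(i) then forces $w(\alpha,0)=1-q$, hence $n=0$ and $D=\emptyset$.

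For the weights I use Proposition~\ref{prop3.3}(b) and split on whether $\beta=0$. When $\beta=0$, the $\fp^*$-invariance $w(\alpha,xz)=w(\alpha,x)$ (noted just after \eqref{c3-1}) gives $\sum_{z\in\fp^*}w(\alpha,z\gamma)=(p-1)w(\alpha,\gamma)$, so $\text{wt}(c_\gamma)$ takes as many values as $w(\alpha,\gamma)$ does. Parts (b) and (c)(ii) of Proposition~\ref{prop3.2} show that $w(\alpha,\gamma)$ takes exactly two values on $\fq^*$, according to whether $j_\gamma$ lies in an explicit ``special set'' $S\subseteq J$ (e.g.\ $S=J_1^{(2)}\cup J_3^{(2)}\cup J_1^{(3)}\cup J_3^{(3)}$ in part (a)(i), with an analogous set in each other sub-case dictated by which of $a_1,\ldots,a_4$ vanish). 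Direct substitution gives the two weights, and the enumerators follow from the uniform count $\#\{\gamma\in\fq^*:j_\gamma=j_0\}=(q-1)/(2\ell^m)$ valid for every $j_0\in J$: the enumerator of the weight attached to $S$ is $|S|(q-1)/(2\ell^m)$ and the other is $(2\ell^m-|S|)(q-1)/(2\ell^m)$.

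When $\beta\ne0$, the family $\{\beta+z\gamma:z\in\fp^*\}$ is the $\fp$-affine line through $\beta$ with direction $\gamma$, minus the basepoint. If $\gamma\in\fp^*\beta$, exactly one $z$ gives $\beta+z\gamma=0$ and the remaining $p-2$ translates are nonzero $\fp$-multiples of $\beta$, all with index $j_\beta$, producing a single ``special'' weight $w_1$. If $\gamma\notin\fp^*\beta$, none of the $p-1$ translates vanish, and since $w(\alpha,\cdot)$ again takes only two values on $\fq^*$, the sum $\sum_{z\in\fp^*}w(\alpha,\beta+z\gamma)$ depends only on the integer $k\in\{0,1,\ldots,p-1\}$ counting how many of those translates have index in the special set. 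Substituting into the weight formula of Proposition~\ref{prop3.3}(b) yields the $p$ weights $w_{k+2}=\tfrac{p-1}{p^2}q\pm\tfrac{p-1-k}{p}\sqrt{q}$, confirming both the bound of $p+1$ distinct nonzero weights and the explicit expressions in parts (c) and (d).

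I expect the main obstacle to be organizational rather than conceptual: each part splits into several sub-cases indexed by which of $a_1,a_2,a_3,a_4$ vanish and by where $j_\beta$ lies in the partition $\{0\}\cup J_1\cup J_2\cup J_3\cup J_4$ of $J$, so one must identify the correct special set $S$ in each sub-case and perform the substitution carefully. In particular, distinguishing the ``$j_\beta\in S$'' versus ``$j_\beta\notin S$'' alternatives in parts (d)(i)--(iii) is exactly the question of whether the extra weight contributed by $\gamma\in\fp^*\beta$ coincides with one of the $w_{k+2}$ or appears as a genuinely new weight $w_1$; I would handle this by computing the sum for $\gamma\in\fp^*\beta$ separately and comparing it to the $k$-indexed family.
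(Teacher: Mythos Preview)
Your proposal is correct and follows essentially the same route as the paper: both plug Proposition~\ref{prop3.2} into Proposition~\ref{prop3.3}, split the $\beta\ne0$ analysis according to whether $\gamma\in\fp^*\beta$ (the paper writes this set as $\Gamma=\{-z^{-1}\beta:z\in\fp^*\}$), and for $\gamma\notin\fp^*\beta$ parametrize the weight by the count $k$ of translates landing in the special index set. One small point worth tightening: for the dimension claim you assert injectivity ``whenever $D$ spans $\fq$'' but do not verify the span directly; the paper instead argues that every computed $\text{wt}(c_\gamma)$ is strictly positive for $\gamma\ne0$, which immediately gives injectivity without a separate spanning check.
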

\begin{proof}
Let \(\alpha \in \mathbb{F}_p\) and \(\beta \in \mathbb{F}_q\). By combining the results of Propositions \ref{prop3.2} and \ref{prop3.3}, the length \(n\) of the code \(\mathcal{C}_{\alpha,\beta}\) can be directly determined. Specifically, \(n = 0\) if \(\alpha \neq 0\), \(\beta = 0\), and none of \(a_1, a_2, a_3, a_4\) are zero, while \(n > 0\) otherwise. In what follows, we assume \(D = \{d_1, \ldots, d_n\} \neq \emptyset\) and proceed to compute all nonzero weights. Define \(c_{\gamma} = (\text{Tr}(d_1\gamma), \ldots, \text{Tr}(d_n\gamma)) \in \mathcal{C}_{\alpha,\beta}\) for \(\gamma \in \mathbb{F}_q^*\).

Consider first the case where \(\beta = 0\). By Proposition \ref{prop3.3}(b), for any \(\gamma \in \mathbb{F}_q^*\), the weight of \(c_{\gamma}\) is given by
\begin{align}\label{c3-4}
\text{wt}(c_{\gamma}) = \frac{1}{p^2} \left( (p-1)q + (p-1)w(\alpha, 0) - (p-1)w(\alpha, \gamma) \right).
\end{align}
Substituting the possible values of \(w(\alpha, \gamma)\) into \eqref{c3-4}, we find that the set of weights \(\{\text{wt}(c_{\gamma}) : \gamma \in \mathbb{F}_q^*\} = \{w_1, w_2\}\), where \(w_1\) and \(w_2\) are as specified in Items (a) and (b). Thus, \(\mathcal{C}_{\alpha,\beta}\) is a two-weight code, and the frequencies of \(w_1\) and \(w_2\) can be determined by counting the number of \(\gamma\) such that \(\text{wt}(c_{\gamma}) = w_i\) for \(i = 1, 2\). This relies on the fact, demonstrated later, that the map \(\gamma \mapsto (\text{Tr}(d_1\gamma), \ldots, \text{Tr}(d_n\gamma))\) from \(\mathbb{F}_q\) to \(\mathcal{C}_{\alpha,\beta}\) is a bijection.

Next, assume \(\beta \neq 0\). From Proposition \ref{prop3.3}(b), for any \(\gamma \in \mathbb{F}_q^*\), the weight is expressed as
\begin{align}\label{c3-5}
\text{wt}(c_{\gamma}) = \frac{1}{p^2} \left( (p-1)q + (p-1)w(\alpha, \beta) - \sum_{z \in \mathbb{F}_p^*} w(\alpha, \beta + z\gamma) \right).
\end{align}
To determine the possible weights, we must evaluate the sum \(\sum_{z \in \mathbb{F}_p^*} w(\alpha, \beta + z\gamma)\) as \(\gamma\) varies over \(\mathbb{F}_q^*\). Define the set \(\Gamma = \{-z^{-1}\beta : z \in \mathbb{F}_p^*\}\). For \(\gamma \in \Gamma\), express \(\gamma = -z_0^{-1}\beta\) for some \(z_0 \in \mathbb{F}_p^*\). Then,
\[
\sum_{z \in \mathbb{F}_p^*} w(\alpha, \beta + z\gamma) = \sum_{z \in \mathbb{F}_p^*} w(\alpha, (1 - z z_0^{-1})\beta) = w(\alpha, 0) + \sum_{z \in \mathbb{F}_p^* \setminus \{z_0\}} w(\alpha, (1 - z z_0^{-1})\beta).
\]
Since \(w(\alpha, zx) = w(\alpha, x)\) for any \(z \in \mathbb{F}_p^*\) and \(x \in \mathbb{F}_q\), it follows that
\begin{align}\label{c3-6}
\sum_{z \in \mathbb{F}_p^*} w(\alpha, \beta + z\gamma) = w(\alpha, 0) + (p-2)w(\alpha, \beta)
\end{align}
for all \(\gamma \in \Gamma\). Combining this with Equation \eqref{c3-5} and Proposition \ref{prop3.2}, we can calculate the corresponding weight \(\text{wt}(c_{\gamma})\) for \(\gamma \in \Gamma\). Now, let \(\gamma \notin \Gamma\) and then \(\beta + z\gamma \neq 0\) for any \(z \in \mathbb{F}_p^*\). By Proposition \ref{prop3.2}, we know that \(w(\alpha, x)\) takes exactly two distinct values, denoted \(A\) and \(B\), as \(x\) ranges over \(\mathbb{F}_q^*\). So, if we define \(k_{\gamma}\) as the frequency of \(A\) in the sequence \(\{w(\alpha, \beta + z\gamma)\}_{z \in \mathbb{F}_p^*}\), then Equation \eqref{c3-5} gives that
\[
\wt(c_{\gamma}) = \frac{1}{p^2} \big( (p-1)q + (p-1-k_{\gamma})(A-B) \big)\  \text{if}\ w(\alpha, \beta) = A,
\]
and
\[
\wt(c_{\gamma}) = \frac{1}{p^2} \big( (p-1)q + (p-1-k_{\gamma})(B-A) \big)\  \text{if}\ w(\alpha, \beta) = B.
\]
Thus, we have determined all possible weights of \(c_{\gamma}\) as \(\gamma\) varies over \(\mathbb{F}_q^* \setminus \Gamma\).

Finally, we demonstrate that the map \(\sigma: \gamma \mapsto (\text{Tr}(d_1 \gamma), \ldots, \text{Tr}(d_n \gamma))\) from \(\mathbb{F}_q\) to \(\mathcal{C}_{\alpha, \beta}\) is a bijection. It suffices to prove that \(\sigma\) is injective. Suppose \(\sigma(x) = \sigma(y)\) for some \(x, y \in \mathbb{F}_q^*\). This implies:
\[
\sigma(x) - \sigma(y) = (\text{Tr}(d_1 (x-y)), \ldots, \text{Tr}(d_n (x-y))) =c_{x-y} = 0.
\]
From our earlier computations, we know that \(\wt(c_{\gamma}) \neq 0\) for all \(\gamma \in \mathbb{F}_q^*\). Hence, \(x - y = 0\), and thus \(x = y\), establishing injectivity. Therefore, $\#\mathcal{C}_{\alpha, \beta}=q$ and then $\mathcal{C}_{\alpha, \beta}$ is of dimension $(\ell-1)\ell^{m-1}$.

This completes the proof of Theorem \ref{thm3.4}.
\end{proof}

With the assistance of the mathematical software {\ttfamily SageMath}, we computed the following examples.

\begin{exa}\label{exa3.5}
Consider $p=3$, $\ell=7$, $m=1$, and $\alpha=\beta=1$. We verify that $\C_{\alpha,\beta}$ is a $[231,6]$ code with four distinct weights: $w_1=135$, $w_2=144$, $w_3=153$, and $w_4=162$, having weight enumerators $A_{w_1}=2$, $A_{w_2}=132$, $A_{w_3}=360$, and $A_{w_4}=234$.
\end{exa}

\begin{exa}\label{exa3.6}
Let $p=7$, $\ell=5$, $m=1$, and $\alpha=\beta=1$. We confirm that $\C_{\alpha,\beta}$ is a $[323,4]$ code with seven distinct weights: $w_1=203$, $w_2=259$, $w_3=266$, $w_4=273$, $w_5=280$, $w_6=287$, and $w_7=294$, with weight enumerators $A_{w_1}=6$, $A_{w_2}=96$, $A_{w_3}=420$, $A_{w_4}=432$, $A_{w_5}=924$, $A_{w_6}=456$, and $A_{w_7}=66$.
\end{exa}

\begin{exa}\label{exa3.7}
Take $p=7$, $\ell=5$, $m=1$, $\alpha=2$, and $\beta=1$. We establish that $\C_{\alpha,\beta}$ is a $[343,4]$ code with two distinct weights: $w_1=294$ and $w_2=343$, with weight enumerators $A_{w_1}=2394$ and $A_{w_2}=6$.
\end{exa}

\begin{exa}\label{exa3.8}
Set $p=7$, $\ell=5$, $m=1$, $\alpha=1$, and $\beta=0$. We determine that $\C_{\alpha,\beta}$ is a $[960,4]$ code with two distinct weights: $w_1=798$ and $w_2=840$, with weight enumerators $A_{w_1}=960$ and $A_{w_2}=1440$.
\end{exa}

Theorem \ref{thm3.4} asserts that $\C_{\alpha,\beta}$ has at most $p+1$ nonzero weights. Example \ref{exa3.5} demonstrates that $p+1$ serves as a tight upper bound for the number of nonzero weights.

\section{The dual codes}
For $\alpha \in \fp$ and $\beta \in \fq$, let $\C^{\perp}_{\alpha,\beta}$ denote the dual code of $\C_{\alpha,\beta}$, defined as
$$\C^{\perp}_{\alpha,\beta}=\{c\in\fp^n: \<c,x\>=0\ \text{for\ all}\ x\in\C_{\alpha,\beta}\},$$
where $n$ represents the length of $\C_{\alpha,\beta}$, and $\langle c, x \rangle$ is the Euclidean inner product of $c$ and $x$. 
\begin{thm}\label{thm4.1}
Let $\beta = 0$ and $n$ be the length of $\C_{\alpha,0}$. If $n > 0$, then $\C_{\alpha,0}^{\perp}$ has dimension $n - (\ell - 1)\ell^{m-1}$ and minimum distance $2$.
\end{thm}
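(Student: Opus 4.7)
The plan is to split the claim into the dimension identity and the distance computation $d(\C_{\alpha,0}^{\perp})=2$. For the dimension, Theorem \ref{thm3.4} shows that whenever $n>0$ the code $\C_{\alpha,0}$ has dimension $e=(\ell-1)\ell^{m-1}$, so the standard relation $\dim\C_{\alpha,0}+\dim\C_{\alpha,0}^{\perp}=n$ immediately yields $\dim\C_{\alpha,0}^{\perp}=n-(\ell-1)\ell^{m-1}$.

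For the minimum distance, I would first rule out weight-$1$ codewords. Suppose $c\in\C_{\alpha,0}^{\perp}$ has weight $1$, with support $\{i\}$ and nonzero entry $c_i\in\fp^{*}$. The dual condition forces $c_i\Tr(d_i x)=0$ for every $x\in\fq$, hence $\Tr(d_i x)=0$ for all $x\in\fq$. Non-degeneracy of the trace form then gives $d_i=0$, contradicting $D\subseteq\fq^{*}$. Therefore $d(\C_{\alpha,0}^{\perp})\ge 2$.

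The main step is to exhibit an explicit codeword of weight $2$ in $\C_{\alpha,0}^{\perp}$. A vector supported on two coordinates $i\ne j$ with values $c_i,c_j\in\fp^{*}$ lies in $\C_{\alpha,0}^{\perp}$ precisely when $c_i d_i+c_j d_j=0$, i.e., when $d_i$ and $d_j$ are $\fp^{*}$-proportional. So it suffices to show that $D$ is closed under multiplication by $\fp^{*}$ and that $p$ being odd forces at least two $\fp^{*}$-equivalent elements in $D$. Since $\beta=0$, for any $x\in D$ and $\lambda\in\fp^{*}$ one has $(\lambda x)^{(q-1)/(2\ell^{m})}=\lambda^{(q-1)/(2\ell^{m})}\,x^{(q-1)/(2\ell^{m})}$, and the argument used inside the proof of Proposition \ref{prop3.1}(c) shows $\lambda^{(q-1)/(2\ell^{m})}=1$; consequently $\Tr((\lambda x)^{(q-1)/(2\ell^{m})})=\Tr(x^{(q-1)/(2\ell^{m})})=\alpha$, so $\lambda x\in D$. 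Because $n>0$ we can pick any $d_i\in D$, and because $p$ is odd we can pick $\lambda\in\fp^{*}\setminus\{1\}$; then $d_i$ and $d_j:=\lambda d_i$ are two distinct elements of $D$, and the vector with entries $\lambda$ and $-1$ at positions $i,j$ and zero elsewhere is the desired weight-$2$ element of $\C_{\alpha,0}^{\perp}$.

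The hard part is really only the closure step, whose nontrivial input is the arithmetic fact that $p-1$ divides $(q-1)/(2\ell^{m})$; everything else is formal, and this fact has already been established in the proof of Proposition \ref{prop3.1}(c) using the hypothesis that $p$ is a primitive root modulo $2\ell^{m}$.
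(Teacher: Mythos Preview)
Your proof is correct and follows essentially the same approach as the paper: the dimension comes from Theorem \ref{thm3.4}, the lower bound $d\ge 2$ from nondegeneracy of the trace, and the upper bound from the $\fp^{*}$-closure of $D$ furnished by the divisibility $(p-1)\mid (q-1)/(2\ell^{m})$ established in Proposition \ref{prop3.1}(c). The only cosmetic difference is that the paper takes the specific scalar $\lambda=-1$, which yields the slightly cleaner weight-$2$ dual codeword with entries $c_i=c_j=1$.
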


\begin{proof}
The dimension follows directly from Theorem \ref{thm3.4}. We now determine the minimum distance $d$ of $\C_{\alpha,0}^{\perp}$. First, we note that $d \geq 2$, which is guaranteed by the surjectivity of the trace function. Rewrite the set $D = \{x \in \mathbb{F}_q^* : \textup{Tr}(x^{\frac{q-1}{2\ell^m}}) = \alpha\}$ as $D = \{d_1, d_2, \ldots, d_n\}$. For any $d_i \in D$, it follows that $-d_i \in D$ as well, since the trace function satisfies $\textup{Tr}((-d_i)^{\frac{q-1}{2\ell^m}}) = \textup{Tr}(d_i^{\frac{q-1}{2\ell^m}}) = \alpha$. Moreover, $d_i \neq -d_i$, implying that $-d_i = d_j$ for some $j \neq i$. Consider the codeword $c = (c_1, c_2, \ldots, c_n)$, where $c_i = c_j = 1$ and $c_k = 0$ for all $k \neq i, j$. This codeword belongs to $\C_{\alpha,0}^{\perp}$ and has weight two. Hence, $d \leq 2$, and combined with $d \geq 2$, we conclude that $d = 2$.
\end{proof}
\begin{rem}
Let $d$ be the minimum distance of $\C_{\alpha,\beta}^{\perp}$. For $\beta \neq 0$, we have $d \geq 2$. Moreover, we can establish a necessary and sufficient condition for $d = 2$. Suppose $c = (c_1, \dots, c_n) \in \C_{\alpha,\beta}^{\perp}$ has $c_i \neq 0$ and $c_j \neq 0$ for some $i \neq j$, with $c_k = 0$ for all $k \neq i, j$. Then, for all $x \in \fq$, we have
\[
c_i{\rm Tr}(d_i x) + c_j{\rm Tr}(d_j x) = 0.
\]
This implies $c_i d_i + c_j d_j = 0$. Since $(p-1) \mid \frac{q-1}{2\ell^m}$, it follows that $d_i$ and $d_j$ belong to the set
\[
Y = \{ x \in \fq^* : \Tr(x^{\frac{q-1}{2\ell^m}}) = \alpha, \, \Tr(\beta x) = 0 \}.
\]
Thus, $\# Y > 0$. Conversely, if $\# Y > 0$, then $d = 2$. Hence, $d = 2$ if and only if $\# Y > 0$. Furthermore,
\[
\# Y = \frac{1}{p^2} \sum_{x \in \fq^*} \sum_{y \in \fp} \zeta_p^{y {\rm Tr}(x^{\frac{q-1}{2\ell^m}}) - y \alpha} \sum_{z \in \fp} \zeta_p^{z {\rm Tr}(\beta x)}.
\]
This simplifies to $\# Y = \frac{1}{p^2} \big( w(\alpha, 0) + (p-1) w(\alpha, \beta) + q - p \big)$. Therefore, by Proposition \ref{prop3.2}, we can obtain a necessary and sufficient condition for $d=2$.
\end{rem}
\begin{defn}\label{defn4.3}
A code $\mathcal{C}$ with parameters $[n,k,d]$ is called \textit{optimal} if there does not exist an $[n,k,d+1]$ code.
\end{defn}
Using the sphere packing bound and straightforward verification, the final result of this paper follows directly.

\begin{cor}\label{cor4.4}
Let $\alpha = \beta = 0$, and let $n$ be the length of the code $\mathcal{C}_{\alpha,\beta}^{\perp}$. Then $\mathcal{C}_{\alpha,\beta}^{\perp}$ is optimal for $n > 0$.
\end{cor}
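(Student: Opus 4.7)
The plan is to invoke the sphere packing (Hamming) bound and use it to rule out the existence of any $[n, n-e, 3]$ code over $\fp$, where $e = (\ell-1)\ell^{m-1} = \phi(2\ell^m)$. Since Theorem \ref{thm4.1} already certifies that $\C_{0,0}^{\perp}$ has parameters $[n, n-e, 2]$, such a ruling-out immediately yields optimality in the sense of Definition \ref{defn4.3}.

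First, I would read off the explicit length $n$ from Theorem \ref{thm3.4}(a), which splits into two subcases: $n = \frac{(q-1)(\ell^m - \ell + 1)}{\ell^m}$ when $\ell \equiv 1 \pmod p$, and $n = \frac{(q-1)(\ell^{m-1} - 1)}{\ell^{m-1}}$ when $\ell \not\equiv 1 \pmod p$ (the condition $m \geq 2$ being exactly what ensures $n > 0$). In both subcases the dimension is $k = n - e$, so $p^{n-k} = p^{e} = q$.

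Next, I would apply the sphere packing bound $\sum_{i=0}^{\lfloor (d-1)/2 \rfloor} \binom{n}{i}(p-1)^i \leq p^{n-k}$ with $d = 3$ and $k = n - e$; it collapses to the single inequality $1 + n(p-1) \leq q$. Hence it is enough to verify the strict reverse inequality $n(p-1) + 1 > q$, equivalently $n > (q-1)/(p-1)$, which forbids any $[n, n-e, 3]$ code and finishes the proof.

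Finally, the inequality $n > (q-1)/(p-1)$ is checked by direct substitution of the formula for $n$. In the first subcase it reduces to $(p-1)(\ell^m - \ell + 1) > \ell^m$, and in the second to $(p-2)\ell^{m-1} > p - 1$. The main step, and the only place where any real work is needed, is this elementary arithmetic verification: one exploits the standing hypotheses that $p \geq 3$ is an odd prime primitive root modulo $2\ell^m$, that $\ell \neq p$ is an odd prime (so $\ell \geq 3$), and that $\ell \equiv 1 \pmod p$ in the first subcase forces $\ell \geq p+1$. The second subcase is then immediate from $(p-2)\ell^{m-1} \geq 3(p-2) \geq p-1$ for $p \geq 3$ and $m \geq 2$, while the first subcase I expect to handle by a short case split on $m$, using the rapid growth of $\ell^m$ to dominate the linear correction $\ell - 1$.
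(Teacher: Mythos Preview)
Your approach is exactly the paper's: reduce optimality to the sphere-packing inequality $n>\frac{q-1}{p-1}$ and then verify it from the explicit lengths in Theorem~\ref{thm3.4}(a). The paper stops at ``the result follows immediately from Theorem~\ref{thm3.4}'', while you are more careful and actually set up the two arithmetic inequalities; your treatment of the second subcase is correct.

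The gap is in the first subcase, which you leave as ``a short case split on $m$''. For $m=1$ the inequality $(p-1)(\ell^{m}-\ell+1)>\ell^{m}$ collapses to $p-1>\ell$, and this is impossible: as you yourself observe, $\ell\equiv 1\pmod p$ forces $\ell\ge p+1$ (in fact $\ell\ge 2p+1$, since $p+1$ is even). Concretely, take $p=3$, $\ell=7$, $m=1$; then $q=3^{6}=729$, $e=6$, $n=(q-1)/\ell=104$, while $(q-1)/(p-1)=364$, so $n<\frac{q-1}{p-1}$. This is not merely a hole in the argument but in the statement: whenever $n\le\frac{p^{e}-1}{p-1}$ one can choose $n$ pairwise non-proportional columns in $\fp^{\,e}$ as a parity-check matrix, i.e.\ shorten the Hamming code, to obtain an $[n,n-e,\ge 3]$ code over $\fp$. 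Hence $\C_{0,0}^{\perp}$ with parameters $[104,98,2]$ is \emph{not} optimal, and both your proof and the paper's require the extra hypothesis $m\ge 2$ (under which your inequality $(p-2)\ell^{m}>(p-1)(\ell-1)$ is indeed easy) in the subcase $\ell\equiv 1\pmod p$.
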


\begin{proof}
Denote by $d$ the minimum distance of the dual code $\mathcal{C}_{\alpha,\beta}^{\perp}$. By the sphere packing bound,
\[
|\mathcal{C}_{\alpha,\beta}^{\perp}| \leq \frac{p^n}{\sum_{i=0}^{t} \binom{n}{i} (p-1)^i},
\]
where $t = \lfloor \frac{d-1}{2} \rfloor$ is the greatest integer less than or equal to $\frac{d-1}{2}$. Since $d = 2$ by Theorem \ref{thm4.1}, the dual code $\mathcal{C}_{\alpha,\beta}^{\perp}$ is optimal whenever $n > \frac{q-1}{p-1}$. Thus, the result follows immediately from Theorem \ref{thm3.4}.
\end{proof}

\section{Conclusion}
In this paper, we have conducted a comprehensive analysis of the weight distribution of the $p$-ary linear code $\mathcal{C}_{\alpha, \beta}$, defined over the finite field $\fq$ with $q = p^{\phi(N)}$, where $p$ is an odd prime and $N = 2\ell^m$ such that $p$ is a primitive root modulo $N$. By leveraging exponential sums, notably the binomial Weil sums evaluated through the results of \cite{[CG]}, we studied the weight distribution of $\C_{\alpha, \beta}$ for any $\alpha \in \mathbb{F}_p$ and $\beta \in \mathbb{F}_q$. Our findings generalize previous results for specific primes (e.g., $p = 2$ in \cite{[WDX]} and $p = 3$ in \cite{[Che]}) to the broader context of any odd prime $p$, offering a unified framework for understanding this class of codes. Moreover, we explored the dual code $\mathcal{C}_{\alpha, \beta}^{\perp}$, demonstrating that $\mathcal{C}_{0, 0}^{\perp}$ has minimum distance 2 and is optimal with respect to the sphere packing bound.

One significant application of a linear code $\C$ over $\fp$ is in constructing secret sharing schemes, as exemplified in \cite{[Sha]}. Let $\wt_{\min}$ and $\wt_{\max}$ denote the minimum and maximum nonzero weights of the $p$-ary linear code $\C$. If $\frac{\wt_{\min}}{\wt_{\max}} > \frac{p-1}{p}$, then $\mathcal{C}$ can be utilized to design secret sharing schemes with notable access structures \cite{[YD]}. From Theorem \ref{thm3.4}, we verify that many of the codes presented in this paper satisfy the condition
$$\frac{\wt_{\min}}{\wt_{\max}}>\frac{p-1}{p}$$
under certain natural conditions. For instance, in Example \ref{exa3.8}, we find $\frac{\wt_{\min}}{\wt_{\max}} = \frac{798}{840} >\frac{6}{7}$.

\section*{acknowledgment}
The first author would like to sincerely thank Professor Arne Winterhof for his invaluable guidance and support during the first author's visit at the RICAM Institute of the Austrian Academy of Sciences.


\end{document}